\newif\ifarxiv
\DeclareRobustCommand{\textcite}[1]{\citet{#1}}
\DeclareRobustCommand{\citet}[1]{\textcite{#1}}
\colorlet{lcolor}{blue!40!black}
\colorlet{ucolor}{blue!50!cyan!50!black}
\colorlet{ccolor}{blue!50!cyan!50!black}
\colorlet{fcolor}{red!40!black}
\newcommand{\mailto}[1]{{\href{mailto:#1}{#1}}}
\DeclareMathOperator{\Span}{span}
\renewcommand{\H}{\mathcal{H}}
\newcommand{\N}{\mathbb{N}}
\newcommand{\R}{\mathbb{R}}
\renewcommand{\S}{\mathbb{S}}
\newcommand{\X}{\mathbb{X}}
\newcommand{\Y}{\mathbb{Y}}
\newcommand{\borel}[1]{\ensuremath{\mathcal{B}(#1)}}
\newcommand{\sample}{\ensuremath{\Omega}}
\newcommand{\sigalgebra}{\ensuremath{\mathcal{A}}}
\newcommand{\kme}[1]{\ensuremath{\mu_{#1}}}
\newcommand{\dt}{\ensuremath{\Delta t}}
\newcommand{\flow}{\ensuremath{\phi}}
\newcommand{\Flow}{\ensuremath{\Phi}}
\newcommand{\outtraj}{\ensuremath{\gamma}}
\newcommand{\Outtraj}{\ensuremath{\Gamma}}
\newcommand{\lawoutput}[2]{\ensuremath{\P_{#1}^{#2}}}
\newcommand{\corrmap}{\ensuremath{K}}
\newcommand{\Yhat}{\ensuremath{\hat Y}}
\newcommand{\Gammahat}{\ensuremath{\hat \Gamma}}
\newcommand{\gammahat}{\ensuremath{\hat \gamma}}
\newcommand{\neta}{\ensuremath{{n_\eta}}}
\newcommand{\nepsilon}{\ensuremath{{n_\epsilon}}}
\newcommand{\nx}{\ensuremath{{n_x}}}
\newcommand{\ny}{\ensuremath{{n_y}}}
\newcommand{\procnoise}{\eta}
\newcommand{\measnoise}{\epsilon}
\newcommand{\normal}[2]{\mathcal{N}\left(#1, #2\right)}
\renewcommand{\d}{\ensuremath{\mathrm{d}}}
\renewcommand{\a}{\ensuremath{\mathrm{a}}}
\renewcommand{\b}{\ensuremath{\mathrm{b}}}
\newcommand{\probmeasures}[1]{\ensuremath{\mathcal{M}^+_1(#1)}}
\newcommand{\norm}[1]{{\lVert #1 \rVert}}
\newcommand{\mmd}{\ensuremath{\mathrm{MMD}}}
\newcommand{\mmdsq}{\ensuremath{\mmd^2}}
\newcommand{\mmdb}{\ensuremath{\widehat{\mmd}}}
\newcommand{\Dirac}[1]{\ensuremath{\delta_{#1}}}
\newcommand{\E}{\mathbb{E}}
\renewcommand{\P}{\mathbb{P}}
\newcommand{\Q}{\mathbb{Q}}
\newcommand{\irchi}[2]{\raisebox{\depth}{$#1\chi$}} 
\DeclareRobustCommand{\rchi}{{\mathpalette\irchi\relax}}
\newcommand{\characteristic}[1]{\ensuremath{\rchi_{#1}}}
\newcommand{\thresh}{\kappa}
\newtheorem{definition}{Definition}
\newtheorem{theorem}{Theorem}
\newtheorem{corollary}{Corollary}
\newtheorem{proposition}{Proposition}
\newtheorem{remark}{Remark}
\newtheorem{assumption}{Assumption}
\newtheorem{example}{Example}
\newacro{RKHS}[RKHS]{reproducing kernel Hilbert space}
\newacro{PBN}[PBN]{probabilistic Boolean network}
\newacro{KME}[KME]{kernel mean embedding}
\newacro{EKF}[EKF]{extended Kalman filter}
\newacro{KKL}[KKL]{Kazantzis--Kravaris/Luenberger}
\newacro{MMD}[MMD]{maximum mean discrepancy}
\newacro{as}[a.s.]{almost surely}
\newacro{wrt}[w.r.t.]{with respect to}
\newacro{iff}[iff]{if, and only if,}
\newcommand*{\transpose}{%
  {\mathpalette\@transpose{}}%
}
\newcommand*{\@transpose}[2]{%
    \small \raisebox{1.37\depth}{$\m@th#1\intercal$}%
}
\newcommand{\eg}{e\/.\/g\/.,\/~}
\renewcommand{\d}{\ensuremath{\mathrm{d}}}
\newcommand{\reworded}[1]{{\color{Magenta}#1}}
\newcommand{\added}[1]{{\color{blue}#1}}
\newcommand{\removed}[1]{{\color{red}\sout{#1}}}
\newcommand{\reviewcomment}[1]{\marginnote{\#\,#1}}
\renewcommand{\reworded}[1]{#1}
\renewcommand{\added}[1]{#1}
\renewcommand{\removed}[1]{}
\renewcommand{\reviewcomment}[1]{}
\begin{document}

\title{
 	Data-Driven Observability Analysis for Nonlinear Stochastic Systems
 }

\author{
    Pierre-François Massiani$^\dagger$\,\orcid{0000-0002-8019-4401}, Mona Buisson-Fenet$^\dagger$\,\orcid{0000-0002-7322-2101}, Friedrich Solowjow\,\orcid{0000-0003-2623-5652}, Florent Di Meglio\,\orcid{0000-0002-0232-6130}, and Sebastian Trimpe\,\orcid{0000-0002-2785-2487}
    \thanks{$^\dagger$ The authors contributed equally.}
    \ifarxiv\else\thanks{Paper initially submitted on Feb. 27\textsuperscript{th}, 2023.}\fi
    \thanks{P.-F. Massiani (\mailto{massiani@dsme.rwth-aachen.de}), M. Buisson-Fenet (\mailto{mona.buissonfenet@me.com}), F. Solowjow (\mailto{solowjow@dsme.rwth-aachen.de}), and S. Trimpe (\mailto{trimpe@dsme.rwth-aachen.de}) are with the Institute for Data Science in Mechanical Engineering, RWTH Aachen University, 52068 Aachen, Germany.}
    \thanks{M. Buisson-Fenet and F. Di Meglio (\mailto{florent.di\_meglio@minesparis.psl.eu}) are with the Centre Automatique et Systèmes, Mines Paris -- PSL university, 75006 Paris, France.}
    \thanks{M. Buisson-Fenet is also with the Ansys Research Team, Ansys France, 69100 Villeurbanne, France.}
}
\ifarxiv
\markboth{Data-Driven Observability Analysis for Nonlinear Stochastic Systems}{Data-Driven Observability Analysis for Nonlinear Stochastic Systems}
\copyrightnotice
\fi
\maketitle

\begin{abstract}
    Distinguishability and, by extension, observability are key properties of dynamical systems.
    Establishing these properties is challenging, especially when no analytical model is available and they are to be inferred directly from measurement data.
    The presence of noise further complicates this analysis, as standard notions of distinguishability are tailored to deterministic systems.
    We build on distributional distinguishability, which extends the deterministic notion by comparing  distributions of outputs of stochastic systems.
    We first show that both concepts are equivalent for a class of systems that includes linear systems.
    We then present a method to assess and quantify distributional distinguishability from output data.
    Specifically, our quantification measures how much data is required to tell apart two initial states, inducing a continuous spectrum of distinguishability.
    We propose a \emph{statistical test} to determine a threshold above which two states can be considered distinguishable with high confidence.
    We illustrate these tools by computing distinguishability maps over the state space in simulation, then leverage the test to compare sensor configurations on hardware.
\end{abstract}

\begin{IEEEkeywords}
Kernel methods; Stochastic systems; Output feedback and Observers; Statistical learning; Nonlinear systems
\end{IEEEkeywords}

\section{Introduction}

\IEEEPARstart{D}{istinguishability} is the property that allows telling apart different initial states from output measurements; observability then refers to all states being distinguishable.
It is a core assumption in observer design, and there exist multiple sufficient criteria to establish observability from analytical models\,\cite{Pauline_survey_observers}.
However, in many cases, such an analytical model is either unavailable or impractical to work with.
Examples involve engineering systems with intractable models, black-box simulators, or systems that are only partially or imperfectly modeled such that the derived properties may not transfer to the real world\,\cite{Sjoberg_nonlin_black-box_models_sysid_overview}.
In contrast, we often have access to \emph{measurement data} generated by these systems.
In this paper, we propose to infer \reworded{\reviewcomment{1.1}distinguishability} of the underlying system from such data, without relying on an explicit model of the dynamics.
\par
In many cases, measurements are corrupted by noise.
Yet, the standard definition of observability is tailored to deterministic systems\,\cite{Son1998}, and noise is often treated as a perturbation against which an observer should be robust\,\cite{Simon_optimal_state_estimation}.
\reworded{\reviewcomment{1.2}This perspective ties observability to a specific choice of observer; in contrast, we wish to define stochastic observability as an inherent property of the stochastic system, independent of the observer.}
We argue that there is a broad class of reasonable definitions, each corresponding to a way of comparing the underlying stochastic processes\,\cite{LZZC2021,FV2020,ZGG2019a,HBT2003,ZC2004,SL2010}.
We generalize \emph{distributional distinguishability}, first defined for \acp{PBN}\,\cite{LZZC2021}, to arbitrary nonlinear systems.
It consists in comparing the distributions of the output.
We show that distributional distinguishability is a reasonable generalization of the deterministic notion by proving that the stochastic and nominal systems share the same classes of indistinguishability under some assumptions, which include linear systems.
\par
\ifarxiv
\IEEEpubidadjcol  
\fi
Our main contribution is then a method to assess and quantify this distributional distinguishability from measurement data.
To achieve this, we propose to use a metric between the distributions of output measurements; its value reflects how different these distributions are.
We choose the \ac{MMD}~\cite{MFSS2017} as this metric, for three reasons.
First, it takes values in a continuous set, which induces a continuous spectrum of \emph{relative} distinguishability on which some pairs are more distinguishable than others.
Second, it can be estimated from data, hereby establishing distinguishability of initial distributions with high confidence given the data (\emph{absolute} distinguishability) through statistical testing.
Third, its continuous value is interpreted as how easily two initial distributions can be told apart, in the sense of how much data the test requires.
\par

We validate the proposed tools in simulation and on hardware.
We first demonstrate using the finite-sample \ac{MMD} to evaluate and quantify distinguishability. 
We experimentally recover the nominal system's analytical classes of indistinguishability in the cases covered by our theoretical results, and empirically demonstrate how the metric increases as we get farther from a state's class of indistinguishability.
This also enables evaluating the influence of the noise on said classes.
Further, we illustrate the difference between our test and \emph{empirical Gramians}~\cite{PM2020a}, a tool for data-driven weak observability analysis.
We then use the proposed test on a Furuta pendulum for experiment design by checking whether removing specific sensors harms observability.
Finally, we begin bridging the gap between the proposed tools and observer design by showcasing how an observer outputs the same distribution over estimated trajectories when the system is initialized in indistinguishable states.
This is a first step towards using our tools for a priori error analysis in observer design; the \ac{MMD} indicates which states observers will fail to distinguish.

\subsection{Related Work}
\label{subsec:related work}

We propose a method to assess and quantify \reworded{\reviewcomment{1.1}distinguishability} from output data without explicitly relying on a dynamics model.
To the best of our knowledge, such a model-free, data-driven analysis has never been proposed for nonlinear stochastic systems.
We start by reviewing existing notions of stochastic observability and distinguishability and corresponding metrics.
We then present an essential concept for the metric we propose:~\acp{KME}.


\subsubsection{Stochastic Observability}
Noise is often treated as a perturbation against which observers should be robust\,\cite{Simon_optimal_state_estimation}; stochastic observability then reduces to (approximate) convergence of usual observers, possibly up to some probability\,\cite{SAS1975}.
Despite its practical appeal, this approach ties observability (a system property) to the observer (a design choice), and neglects the question of how noise affects observability independently of the chosen observer. 
Answering it requires defining observability for a stochastic system; yet, there is no consensus over such a definition\,\cite{PM2020a}.
\par
Many approaches deal with linear systems and rely on the superposition principle to define observability of the whole system directly~\cite{DM2004}. 
Extensions to nonlinear systems typically result in local notions of observability.
A relevant example is stochastic Gramians\,\cite{PM2020a}, a generalization of empirical Gramians of nonlinear systems\,\cite{PM2015,Him2018}.
Similarly to our method, they enable a data-driven analysis of observability with the important difference that they build on \emph{local weak observability} instead of distinguishability; they only provide local information, as we illustrate in Section~\ref{sec:numerical}.
\par

Competing approaches aim at generalizing  \emph{distinguishability}\,\cite{Son1998} of (distributions over) initial states, rather than observability.
Core to these approaches is a way to compare the underlying \emph{stochastic processes} induced by the stochastic initial state, dynamics, and measurement.
Since stochastic processes can be compared in many more ways than deterministic trajectories, there exist different notions of distinguishability for stochastic systems.
Examples involve \acp{PBN}\,\cite{LZZC2021,FV2020,ZGG2019a} and other linear systems\,\cite{HBT2003,ZC2004,SL2010}; their corresponding definitions of distinguishability are not always explicit, but can be straightforwardly adapted from those of observability.
Interestingly, the relationship between such definitions and the existence of a probabilistic observer has not been studied, unlike for deterministic systems\,\cite{Pauline_survey_observers}.
Our approach builds on \emph{distributional distinguishability}\,\cite{LZZC2021}, which we show is equivalent to observability of the nominal system in certain cases, such as linear dynamics or additive measurement noise.
One advantage of this notion is also its amenability to statistical tests provided independent realizations of the system.

\subsubsection{Metricizing Observability}
A practical way to check observability is to summarize it in a scalar value or function.
For linear systems, such a metric is given by the rank of the observability matrix or, equivalently, of the observability Gramian\,\cite{Che1999}.
An alternative is the largest singular value of the Gramian, introducing a continuous scale and enabling a \emph{quantification} of observability.
These metrics translate locally to nonlinear systems with empirical Gramians\,\cite{PM2015};
they inform on the weak observability of each state.
Gramians and their corresponding metrics have also been extended for stochastic systems\,\cite{PM2020a}.
They are formally defined for linear systems with additive or multiplicative noise, and extensions towards arbitrary nonlinear systems are only hinted at\,\cite[Section\,5]{PM2020a}.
While our theoretical analysis shares similar restrictions, our approach differs in that \begin{enumerate*}\item our definition and quantification are readily applicable to both linear and nonlinear systems; and \item we quantify \emph{distinguishability} of two initial states, instead of weak observability.\end{enumerate*}

Finally, there is a body of research that defines observability\,\cite{LB2011} and observability metrics\,\cite{SD2014} based on information-theoretic properties of the output.
This definition strongly differs from our approach, which is thus of independent interest.

\subsubsection{Kernel Mean Embeddings}
Our quantification of \reworded{\reviewcomment{1.1}distinguishability} relies on \acp{KME} of probability distributions\,\cite{MFSS2017}.
In short, these embeddings represent probability distributions in a high-dimensional \ac{RKHS} where scalar products can be performed efficiently and often estimated from data.
Therefrom developed many methods to model and compare random variables, such as modeling of conditional relations\,\cite{SFG2013} or Bayesian inference\,\cite{FSG2013}.
Essential for this work is the \ac{MMD}, a distance between probability distributions that can be estimated from data~\cite{BGR2006}.
The \ac{MMD} was leveraged in a distribution-free two-sample test able to compare probability distributions from independent samples\,\cite{GBR2012}. 
We heavily rely on both the \ac{MMD} and this test, which requires independent samples from the same distributions. This can be problematic for dynamical systems, where the future greatly depends on the past. Some approaches alleviate this problem with mixing\,\cite{SBF2022};
in contrast, we assume independent reinitializations of the system.

\subsection{Notations}
The Borel $\sigma$-algebra of an open set $\S\subset\R^n$ is $\borel{\S}$, and $\probmeasures{\S}$ is the set of probability measures on $(\S, \borel{\S})$.
We assume open sets $\X\subset\R^\nx$ and $\Y\subset\R^\ny$, respectively called state and output spaces, and equip them with their Borel $\sigma$-algebras.
Random variables are defined on a probability space $(\sample, \sigalgebra, \P)$.
\reworded{\reviewcomment{1.4}
When defining a random variable $Z$, we abuse notations and write $Z\in\S$ to indicate concisely that $Z$ takes values in $(\S, \borel{\S})$; its characteristic function is $\characteristic{Z}$.
}
\section{Distinguishability}


We first recall a definition of distinguishability for deterministic systems (Section\,\ref{subsec:deterministic observability}), then extend this notion to nonlinear stochastic systems by generalizing distributional distinguishability of \acp{PBN} (Section\,\ref{subsec:distributional observability}).
This sets the stage for our first contribution: showing that, for linear dynamics or certain nonlinear systems, the two notions are equivalent (Section\,\ref{subsec:noise and distinguishability}).
\par
For simplicity, we focus on autonomous systems.
All concepts and results translate to control systems by considering distinguishability for a \emph{specific} controller, as is commonly done~\cite{Son1998,PM2015}.

\subsection{Deterministic Distinguishability}\label{subsec:deterministic observability}
Consider a deterministic system \begin{equation}
    \begin{split}
        x_{t+1} &= f(x_t),\\
        y_t &= h(x_t),
        \label{eq:deterministic system}
    \end{split}
\end{equation}
with arbitrary initial condition $x_0\in\X$. The function $f:\X\to\X$ is the dynamics, $h:\X\to\Y$ is the output map.
We denote by $\flow(t, x_0)\in\X$ the state of~\eqref{eq:deterministic system} initialized in $x_0$, at time $t$.
Additionally, define the \emph{output trajectory} up to time $T\in\N$ of~\eqref{eq:deterministic system} initialized in $x_0$ as \begin{equation}
    \outtraj(T, x_0) = \big(h(\flow(0, x_0)),\,\dots,\,h(\flow(T, x_0))\big)\in\Y^{T+1}.
\end{equation}
We can then define distinguishability and observability as:
\begin{definition}[{\ifarxiv \citet[Chapter\,6]{Son1998}\else \textcite[Chapter\,6]{Son1998}\fi}] \label{def:deterministic observability}
    Let $T\in\N$, and let $x_\a, x_\b\in\X$ be two initial states of~\eqref{eq:deterministic system}.
    We say that $x_\a$ and $x_\b$ are \emph{distinguishable} (in time $T$) if $\outtraj(T, x_\a)\neq\outtraj(T, x_\b)$.
    Otherwise, we say that they are \emph{indistinguishable} (in time $T$).
    Finally, the system is \emph{observable} (in time $T$) if all different initial states are distinguishable.
\end{definition}
\par 
Intuitively, indistinguishable states produce the same output until time $T$.
It is thus impossible to tell them apart by looking only at the output sequence.
In contrast, distinguishability means that the outputs differ after some time.\par
\par 
Definition~\ref{def:deterministic observability} is binary, meaning that an infinitesimal difference between trajectories can make two states distinguishable.
This becomes problematic in practice, in particular for stochastic systems, where two sample paths from the \emph{same} initial point will differ \ac{as} for nontrivial noise, dynamics, and measurement.

\subsection{Distributional Distinguishability}\label{subsec:distributional observability}

We therefore need a generalized notion of distinguishability to handle stochastic systems.
They come with the additional hurdle that output measurements are now random variables.
Hence, the question of distinguishability pertains to the more general problem of how to compare stochastic processes.
There are many suitable ways to define the ``equivalence'' of stochastic processes, each leading to a specific notion of stochastic observability~\cite{ZGG2019a,LZZC2021}.
We focus on distributional distinguishability, since the similarity with the deterministic case makes it a natural extension and because, as we will show, it lends itself to data-driven evaluation.
\par
Consider a stochastic dynamical system of the form
\begin{equation}
    \begin{split}
        X_{t+1} &= F(X_t, \procnoise_t), \\
        Y_t &= H(X_t, \measnoise_t).
        \label{eq:stochastic system}
    \end{split}
\end{equation}
The state $X_t$ and output $Y_t$ are now random variables taking values in $\X$ and $\Y$, respectively.
The variable $\procnoise_t\in\R^\neta$ represents the stochastic part of the dynamics, and $\measnoise_t\in\R^\nepsilon$ the one of the measurement.
We assume that~\eqref{eq:stochastic system} is a noisy version of~\eqref{eq:deterministic system}, that is, $F(x, 0) = f(x)$ and $H(x, 0) = h(x)$.
\par
We allow stochastic initializations, meaning that there exists an initial distribution~$\mu$ defined on $\X$ such that $X_0\sim\mu$. 
We assume that $\procnoise_t$ and $\measnoise_t$ are independent of the past states $(X_s)_{s\leq t}$.
We do not make further assumptions on the noise; e.g., we allow non-centered or autocorrelated noise.
Finally, we denote by $\Flow(t, \mu)$ the random state of~\eqref{eq:stochastic system} initialized as per $\mu$, at time $t\in\N$, and define \begin{equation}
    \Outtraj(T, \mu) = \big(H(\Flow(0, \mu), \measnoise_0),\,\dots,\,H(\Flow(T, \mu), \measnoise_T)\big)\in\Y^{T+1},
\end{equation}
the random output trajectory up to time $T\in\N$.
We denote its law by $\lawoutput{\mu}{T}$.
We are now ready to generalize distributional distinguishability, first introduced for \acp{PBN}~\cite{LZZC2021}, as follows:
\begin{definition}[Distributional Distinguishability] \label{def:distributional observability}
    Let $T\in\N$, and let $\mu_\a, \mu_\b$ be two initial distributions of~\eqref{eq:stochastic system}.
    We say that $\mu_\a$ and $\mu_\b$ are \emph{distributionally distinguishable} (in time $T$) if $\lawoutput{\mu_\a}{T}\neq\lawoutput{\mu_\b}{T}$.
    Otherwise, we say that they are \emph{distributionally indistinguishable} (in time $T$).
    Finally, the system is \emph{distributionally observable} (in time $T$) if all different initial distributions are distributionally distinguishable.
\end{definition}
\par 
Distributional observability thus compares the laws of the random trajectories instead of comparing the trajectories themselves, as Definition~\ref{def:deterministic observability} does.
Intuitively, it focuses on whether one can tell apart different initial distributions by \emph{repeating the experiment}.
\par

Since distributional distinguishability is defined for arbitrary initial distributions, it is also meaningful to talk about distributional distinguishability of \emph{initial states}; the initial distributions involved are then accordingly-centered Diracs.
In what follows, we abuse notations and use $x$ directly instead of $\delta_x$ in $\Flow(t, \cdot), \Outtraj(T, \cdot)$, and $\lawoutput{\cdot}{T}$.

\subsection{Noise and Distinguishability}
\label{subsec:noise and distinguishability}

Distributional observability reduces to the classical, deterministic notion when the system is deterministic with deterministic initializations; the distributions of $\Outtraj(T, x_\a)$ and $\Outtraj(T, x_\b)$ are then Diracs and are thus equal \ac{iff} the variables are equal \ac{as}
Therefore, we study whether noise influences distinguishability.
In particular, can introducing noise make two states distinguishable? 
Conversely, can noise hide the difference between nominally-distinguishable initial states?
In general, yes; the classes of indistinguishability of the nominal and noisy systems differ, as we illustrate in Section~\ref{sec:numerical}.
In this section, we study conditions under which the noise does \emph{not} influence distinguishability.
We start with measurement noise and show that, under a technical assumption, it does not affect distinguishability.
This implies that the deterministic and stochastic notions coincide in the absence of process noise.
\reworded{
We then treat linear systems, for which the notions always coincide even in the presence of process and measurement noise.\reviewcomment{7.3}
}

\subsubsection{Measurement Noise}

We start with the following assumption to restrict the influence of measurement noise to the output of $h$:
\begin{assumption} \label{asmptn:measurement noise output map}
    There exists a measurable function $\corrmap:\Y\times\R^\nepsilon\to\Y$ such that $H(x, \reworded{\measnoise}) = \corrmap(h(x), \reworded{\measnoise})$.
\end{assumption}
\par
This enables introducing the output before its corruption by measurement noise, a variable denoted by $\Yhat_t = h(X_t)$ independent of $\epsilon_t$.
\begin{assumption} \label{asmptn:injectivity measurement}
    For all $\nu_\a, \nu_\b\in\probmeasures{\Y^{T+1}}$ with $\nu_\a\neq\nu_\b$, there exists $A\in\borel{\Y^{T+1}}$ such that, by applying $K$ element-wise, 
    \begin{equation}\begin{split}
        \int_{\Y^{T+1}}\P[K(\gammahat, \epsilon)\in A]\d\nu_\a(\gammahat) \neq \int_{\Y^{T+1}}\P[K(\gammahat, \epsilon)\in A]\d\nu_\b(\gammahat).
    \end{split}\end{equation}
\end{assumption}
\par
This assumption is a \emph{generalized injectivity condition}: it requires the mapping from a distribution $\nu$ of $\Gammahat$ to the output distribution after corruption by measurement noise to be injective, where $\Gammahat = (\Yhat_0, \dots, \Yhat_T)$ is the output trajectory \emph{before} the measurement noise.
It ensures that if two distributions of $\Gammahat$ are distinct, they remain different after applying measurement noise.
\begin{example}[Additive Noise] \label{ex:additive noise}
Albeit technical, Assumption~\ref{asmptn:injectivity measurement} is satisfied for additive noise.
Indeed, assume $K(\gammahat, e) = \gammahat + e$, and take $\nu_\a, \nu_\b\in\probmeasures{\Y^{T+1}}$, distinct.
Let $\Gammahat_\a\sim\nu_\a$, $\Gammahat_\b\sim\nu_\b$, and assume that $\Gammahat_\a, \Gammahat_\b$, and $\epsilon$ are independent.
We have, for all $A\in\borel{\Y^{T+1}}$:\begin{equation}
    \int_{\Y^{T+1}}\P[K(\gammahat, \epsilon)\in A]\d\nu_i(\gammahat) = \P[K(\Gammahat_i, \epsilon)\in A],
\end{equation}
where $i\in\{\a, \b\}$.
Therefore, Assumption~\ref{asmptn:injectivity measurement} is satisfied \ac{iff} $\Gammahat_\a+\epsilon$ and $\Gammahat_\b+\epsilon$ have different laws, \ac{iff} $\characteristic{\Gammahat_\a+\epsilon} \neq \characteristic{\Gammahat_\b+\epsilon}.$
By independence, $\characteristic{\Gammahat_i + \epsilon} =\characteristic{\Gammahat_i}\cdot\characteristic{\epsilon}$.
Therefore, Assumption~\ref{asmptn:injectivity measurement} holds \ac{iff} $\characteristic{\Gammahat_\a} \neq \characteristic{\Gammahat_\b}$, which is true since $\nu_\a\neq\nu_\b$.
\end{example}
\par

Importantly, Assumption~\ref{asmptn:injectivity measurement} enables decoupling the effect of measurement and process noise on distributional distinguishability; measurement noise then does not affect distinguishability.
\begin{theorem} \label{thm:measurement noise}
    Let $\mu_\a, \mu_\b\in\probmeasures{\X}$ be two initial distributions, and let Assumption~\ref{asmptn:measurement noise output map} hold.
    If $\mu_\a$ and $\mu_\b$ are distributionally distinguishable, the distributions of $\Gammahat_\a\,\dot=\,(h(\Flow(t, \mu_\a))_{t=0}^T$ and $\Gammahat_\b\,\dot=\,(h(\Flow(t, \mu_\b))_{t=0}^T$ differ.
    Under Assumption~\ref{asmptn:injectivity measurement}, the converse is also true.
\end{theorem}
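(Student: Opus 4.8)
The plan is to show that the law $\lawoutput{\mu}{T}$ of the full output trajectory is determined by the law of the pre-noise trajectory $\Gammahat$ alone, through a single fixed noise channel, so that both claims reduce to a direct reading of Definition~\ref{def:distributional observability}. First I would establish the key identity: writing $\nu_\a, \nu_\b \in \probmeasures{\Y^{T+1}}$ for the laws of $\Gammahat_\a$ and $\Gammahat_\b$, and more generally $\nu_\mu$ for the law of $\Gammahat = (h(\Flow(t, \mu)))_{t=0}^T$, I claim that for every $A \in \borel{\Y^{T+1}}$,
\begin{equation}
    \lawoutput{\mu}{T}(A) = \int_{\Y^{T+1}} \P[\corrmap(\gammahat, \measnoise) \in A]\,\d\nu_\mu(\gammahat),
\end{equation}
with $\corrmap$ applied element-wise. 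This is the heart of the argument: Assumption~\ref{asmptn:measurement noise output map} gives $\Outtraj(T, \mu) = \corrmap(\Gammahat, \measnoise)$ element-wise, and since the measurement-noise vector $\measnoise = (\measnoise_0, \dots, \measnoise_T)$ is independent of the pre-noise trajectory $\Gammahat$, conditioning on $\Gammahat$ and integrating out $\measnoise$ yields exactly the displayed pushforward. The upshot is that $\lawoutput{\mu}{T}$ depends on $\mu$ only through $\nu_\mu$.

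Given this identity, both directions follow quickly. For the forward implication I would argue by contraposition: if $\nu_\a = \nu_\b$, then the right-hand side is identical for $\mu_\a$ and $\mu_\b$, so $\lawoutput{\mu_\a}{T} = \lawoutput{\mu_\b}{T}$ and the two initial distributions are distributionally indistinguishable; equivalently, distributional distinguishability forces $\nu_\a \neq \nu_\b$, i.e., the laws of $\Gammahat_\a$ and $\Gammahat_\b$ differ. This direction requires no assumption beyond Assumption~\ref{asmptn:measurement noise output map}. For the converse I would invoke Assumption~\ref{asmptn:injectivity measurement}: if $\nu_\a \neq \nu_\b$, it supplies a set $A$ on which the two integrals disagree, which by the identity means $\lawoutput{\mu_\a}{T}(A) \neq \lawoutput{\mu_\b}{T}(A)$, so $\mu_\a$ and $\mu_\b$ are distributionally distinguishable.

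I expect the main obstacle to be the rigorous justification of the pushforward identity, and in particular the independence step used when conditioning on $\Gammahat$. The subtlety is that independence of each $\measnoise_t$ from the past states $(X_s)_{s\leq t}$ does not by itself suffice; one needs the whole measurement-noise vector $(\measnoise_0, \dots, \measnoise_T)$ to be independent of the \emph{entire} pre-noise trajectory $\Gammahat$, including its later-time components. I would therefore make this independence explicit (mirroring what is assumed for the additive case in Example~\ref{ex:additive noise}), after which the conditional law of $\Outtraj(T, \mu)$ given $\Gammahat = \gammahat$ is precisely the law of $\corrmap(\gammahat, \measnoise)$, and the identity—hence both implications—follows immediately.
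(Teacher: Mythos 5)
Your proposal is correct and takes essentially the same route as the paper's proof: the same marginalization identity $\lawoutput{\mu_i}{T}[A] = \int_{\Y^{T+1}} \P[\corrmap(\gammahat,\measnoise)\in A]\,\d\nu_i(\gammahat)$, with the forward direction read off directly from it (your contrapositive phrasing is logically identical) and the converse supplied by Assumption~\ref{asmptn:injectivity measurement}. Your observation that one needs the whole noise vector $(\measnoise_0,\dots,\measnoise_T)$ to be independent of the entire pre-noise trajectory $\Gammahat$—not merely each $\measnoise_t$ independent of past states—is a valid and worthwhile clarification of a step the paper leaves implicit, but it refines rather than changes the argument.
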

\begin{proof}
    Let $\mu_\a, \mu_\b\in\probmeasures{\X}$ be distinguishable, and take $A\in\borel{\Y^{T+1}}$ such that $\lawoutput{\mu_\a}{T}[A]\neq\lawoutput{\mu_\b}{T}[A]$.
    Let $i\in\{\a, \b\}$, and $\nu_i$ be the distribution of $\Gammahat_i$.
    Marginalizing $\Gamma(T, \mu_i)$ on $\Gammahat_i$ yields \begin{equation}\label{eq:marginalized lawoutput}
        \lawoutput{\mu_i}{T}[A] = \int_{\Y^{T+1}}\pi(\gammahat, A)\d\nu_i(\gammahat),
    \end{equation}
    where $\pi(\gammahat, A) = \P\left[K(\gammahat, \epsilon)\in A\right]$ is the probability of the observation lying in $A$ given that the nominal observation is $\gammahat$; it is thus independent of $\mu_\a$ and $\mu_\b$.
    Since $\lawoutput{\mu_\a}{T}[A]\neq\lawoutput{\mu_\b}{T}[A]$, \eqref{eq:marginalized lawoutput} implies that the integral of $\pi(\cdot, A)$ differs when considered w.r.t. $\nu_\a$ and $\nu_\b$.
    Therefore, $\nu_\a\neq\nu_\b$.
    Conversely, assume that $\nu_\a$ and $\nu_\b$ differ.
    Assumption~\ref{asmptn:injectivity measurement} ensures the existence of $A\in\borel{\Y^{T+1}}$ such that the right-hand side of~\eqref{eq:marginalized lawoutput} differ for $i=\a$ and $i=\b$; this shows $\lawoutput{\mu_\a}{^T}\neq\lawoutput{\mu_\b}{T}$ and concludes the proof. 
\end{proof}
\begin{corollary}
\label{coro:no process noise}
    Let Assumptions~\ref{asmptn:measurement noise output map} and~\ref{asmptn:injectivity measurement} hold.
    Assume that there is no process noise, i.e., $\eta_t=0$ \ac{as} for all $t$.
    Let $x_\a, x_\b\in\X$. Then, $x_\a$ and $x_\b$ are distinguishable for~\eqref{eq:deterministic system} \ac{iff} they are distributionally distinguishable for~\eqref{eq:stochastic system}.
\end{corollary}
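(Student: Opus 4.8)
The plan is to specialize Theorem~\ref{thm:measurement noise} to Dirac initial distributions and to exploit the absence of process noise in order to make the pre-measurement output trajectory deterministic. First I would set $\mu_\a = \Dirac{x_\a}$ and $\mu_\b = \Dirac{x_\b}$, following the convention that $x$ stands for $\Dirac{x}$. With this choice, the initial state of~\eqref{eq:stochastic system} equals $x_\a$ almost surely, and since $\procnoise_t = 0$ \ac{as} and $F(x, 0) = f(x)$ by assumption, a straightforward induction over $t$ gives $\Flow(t, x_\a) = \flow(t, x_\a)$ \ac{as} for every $t \leq T$; the same holds with $x_\b$.

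The key consequence is that the pre-measurement output trajectory is almost surely constant: $\Gammahat_\a = (h(\Flow(t, x_\a)))_{t=0}^T = \outtraj(T, x_\a)$ \ac{as}, so its law is the Dirac $\Dirac{\outtraj(T, x_\a)}$, and likewise the law of $\Gammahat_\b$ is $\Dirac{\outtraj(T, x_\b)}$. Two Dirac measures are distinct \ac{iff} their atoms differ, hence the laws of $\Gammahat_\a$ and $\Gammahat_\b$ differ \ac{iff} $\outtraj(T, x_\a) \neq \outtraj(T, x_\b)$, which is exactly deterministic distinguishability of $x_\a$ and $x_\b$ per Definition~\ref{def:deterministic observability}.

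It then remains to close the loop through Theorem~\ref{thm:measurement noise}. Since both Assumptions~\ref{asmptn:measurement noise output map} and~\ref{asmptn:injectivity measurement} are in force, the theorem gives the full equivalence: $x_\a$ and $x_\b$ are distributionally distinguishable \ac{iff} the laws of $\Gammahat_\a$ and $\Gammahat_\b$ differ. Chaining this with the previous equivalence yields distributional distinguishability \ac{iff} $\outtraj(T, x_\a) \neq \outtraj(T, x_\b)$ \ac{iff} deterministic distinguishability, which is the claim.

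I expect no serious obstacle here, as the corollary is essentially a specialization of Theorem~\ref{thm:measurement noise}. The only point that warrants care is the almost-sure collapse of the stochastic flow onto the deterministic one: the induction step must be justified almost surely, using both $\procnoise_t = 0$ \ac{as} and the compatibility condition $F(x, 0) = f(x)$, after which one translates \emph{almost-surely constant} into \emph{law equal to a Dirac}. Everything else is bookkeeping.
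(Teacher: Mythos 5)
Your proposal is correct and follows essentially the same route as the paper: invoke Theorem~\ref{thm:measurement noise} (under both assumptions) to reduce distributional distinguishability to inequality of the laws of $\Gammahat_\a$ and $\Gammahat_\b$, use $\eta_t = 0$ a.s.\ to collapse the pre-measurement trajectories onto the deterministic ones, and conclude via Definition~\ref{def:deterministic observability}. The only difference is that you spell out the inductive identification $\Flow(t, x) = \flow(t, x)$ a.s.\ and the Dirac-law comparison, which the paper leaves implicit.
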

\begin{proof}
    By Theorem~\ref{thm:measurement noise}, $x_\a$ and $x_\b$ are distributionally distinguishable \ac{iff} $\Gammahat_\a$ and $\Gammahat_\b$ have different distributions.
    From $\eta=0$ \ac{as}, we have $\forall x\in\X, (h(\Flow(t, x))_{t=0}^T = \gamma(T, x)$ \ac{as}
    The result follows from Definition~\ref{def:deterministic observability}.
\end{proof}
\reworded{\reviewcomment{1.6}Interestingly}, this corollary does not generalize to non-zero process noise, as we illustrate on an example in Section\,\ref{sec:numerical}.
We expect that such a generalization requires strong joint assumptions on the noise and dynamics, in the absence of which the trajectories of the nominal and noisy systems need not correlate.

\subsubsection{Linear Systems}
For linear systems, however, the superposition principle enables such a generalization.
\begin{theorem}
\label{thm:observability linear}
    \reworded{\reviewcomment{1.7}}Assume that~\eqref{eq:stochastic system} is linear, i.e., $F(x, \reworded{\procnoise}) = Ax + Q\reworded{\procnoise}$ and $H(x, \reworded{\measnoise}) = Cx + R\reworded{\measnoise}$.
    Then, two initial states $x_\a, x_\b\in\X$ are distributionally distinguishable \ac{iff} they are distinguishable for~\eqref{eq:deterministic system}.
    Additionally,~\eqref{eq:stochastic system} is distributionally observable \ac{iff}~\eqref{eq:deterministic system} is observable.
\end{theorem}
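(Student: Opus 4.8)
The plan is to exploit the superposition principle to split each output trajectory into a deterministic part carrying all the dependence on the initial condition and a noise part whose law is independent of it. Iterating the dynamics gives $X_t = A^t X_0 + \sum_{s=0}^{t-1}A^{t-1-s}Q\procnoise_s$, so the output trajectory reads
\begin{equation}
    \Outtraj(T, x_0) = M x_0 + N,
\end{equation}
where $M x_0 = (CA^t x_0)_{t=0}^T = \outtraj(T, x_0)$ is exactly the nominal output trajectory (a linear, hence measurable, map of $x_0$), and $N = \big(C\sum_{s<t}A^{t-1-s}Q\procnoise_s + R\measnoise_t\big)_{t=0}^T$ collects all noise terms. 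Crucially, the law of $N$ does not depend on $x_0$. Hence $\lawoutput{x_0}{T}$ is the law of $N$ translated by the constant vector $\outtraj(T, x_0)$, and $\characteristic{\Outtraj(T,x_0)}(\xi) = e^{i\langle \xi,\, \outtraj(T,x_0)\rangle}\,\characteristic{N}(\xi)$.

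For the state-level equivalence I would argue exactly as in Example~\ref{ex:additive noise}. If $\outtraj(T,x_\a) = \outtraj(T,x_\b)$, the two trajectories are identical translates of $N$, so $\lawoutput{x_\a}{T} = \lawoutput{x_\b}{T}$ and the states are distributionally indistinguishable. Conversely, suppose $\lawoutput{x_\a}{T} = \lawoutput{x_\b}{T}$; equating characteristic functions gives $e^{i\langle \xi,\,\outtraj(T,x_\a)-\outtraj(T,x_\b)\rangle}\characteristic{N}(\xi) = \characteristic{N}(\xi)$ for all $\xi$. Since $\characteristic{N}$ is continuous with $\characteristic{N}(0)=1$, it is nonzero on a neighborhood of the origin, on which $\langle \xi,\,\outtraj(T,x_\a)-\outtraj(T,x_\b)\rangle \in 2\pi\Z$; as this continuous quantity vanishes at $\xi=0$ and takes values in a discrete set on a connected neighborhood, it is identically zero there, whence $\outtraj(T,x_\a) = \outtraj(T,x_\b)$. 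By Definition~\ref{def:deterministic observability}, the two states are distributionally distinguishable \ac{iff} they are distinguishable for~\eqref{eq:deterministic system}.

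For the system-level statement, the same decomposition extends to arbitrary initial distributions: writing $X_\a\sim\mu_\a$ and $X_\b\sim\mu_\b$, each independent of $N$, the measure $\lawoutput{\mu_i}{T}$ is the pushforward of $\mu_i$ by $M$ convolved with the law of $N$, so $\characteristic{\Outtraj(T,\mu_i)}(\xi) = \characteristic{X_i}(M\tran\xi)\,\characteristic{N}(\xi)$. The direction ``distributionally observable $\Rightarrow$ observable'' is immediate by specializing to Dirac initial distributions and invoking the state-level result. For the converse, observability of~\eqref{eq:deterministic system} is equivalent to injectivity of $M$, hence to surjectivity of $M\tran$; I would then show that $\mu\mapsto\lawoutput{\mu}{T}$ is injective, i.e.\ that $\mu_\a\neq\mu_\b$ forces $\lawoutput{\mu_\a}{T}\neq\lawoutput{\mu_\b}{T}$.

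The main obstacle is precisely this last injectivity. Equating characteristic functions only yields $\characteristic{X_\a}(M\tran\xi) = \characteristic{X_\b}(M\tran\xi)$ on the set where $\characteristic{N}$ is nonzero, and $\characteristic{N}$ may vanish; characteristic functions that agree merely on a neighborhood of the origin do not determine a distribution. The observation that rescues the argument is that $\xi\mapsto\characteristic{X_i}(M\tran\xi)$ is constant along $\ker M\tran$, so it suffices that every coset of $\ker M\tran$ meet the non-vanishing set of $\characteristic{N}$; since $M\tran$ is surjective, this slack is available once $T$ is taken large enough (distributional distinguishability being monotone in $T$, as marginalization of $\Outtraj$ shows), and it propagates the equality of $\characteristic{X_\a}$ and $\characteristic{X_\b}$ to all of $\R^\nx$, giving $\mu_\a=\mu_\b$ and completing the proof.
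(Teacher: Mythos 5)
Your state-level argument and the ``distributionally observable $\Rightarrow$ deterministically observable'' direction are correct and take essentially the paper's route: iterate the dynamics to obtain the superposition $\Outtraj(T,x_0)=\outtraj(T,x_0)+N$ with the law of $N$ independent of $x_0$, factor the characteristic functions, and specialize to Diracs. You are in fact more careful than the paper on one point: the paper cancels the noise factor in a chain of equivalences, whereas you cancel it only on a neighborhood of the origin where $\characteristic{N}\neq 0$ and then use the $2\pi\Z$-valued continuity argument; that is the right way to make this step rigorous.

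The genuine gap is in the remaining direction (deterministic observability $\Rightarrow$ distributional observability), exactly where you flagged it. Your rescue --- that every coset of $\ker M\tran$ meets the set $\{\characteristic{N}\neq 0\}$ once $T$ is large enough --- is asserted rather than proved, and it is false in general. Take scalar dynamics $x_{t+1}=ax_t$, $y_t=x_t+\measnoise_t$, with $\measnoise_t$ i.i.d.\ having the Polya characteristic function $(1-|\xi|)_+$ (a valid characteristic function with compact support). Then $\characteristic{N}$ vanishes outside the open unit box of $\R^{T+1}$, while $M\tran\xi=\sum_{t=0}^{T}a^{t}\xi_{t}$, so every coset $\{\xi : M\tran\xi=v\}$ with $|v|\geq\sum_{t=0}^{T}|a|^{t}$ misses that box --- for every $T$. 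Increasing $T$ never supplies the claimed slack (and would in any case establish distinguishability at a later time, not the fixed-$T$ statement). Worse, with this noise the conclusion itself fails: choosing $\mu_\a\neq\mu_\b$ whose characteristic functions coincide on the interval $[-L,L]$ with $L=\sum_{t=0}^{T}|a|^{t}$ (Feller's classical example of distinct characteristic functions agreeing on an interval, rescaled) yields $\lawoutput{\mu_\a}{T}=\lawoutput{\mu_\b}{T}$, so no argument can close this step without restricting the noise. For comparison, the paper finishes differently: once it has equality of the laws of $\outtraj(T,\mu_\a)$ and $\outtraj(T,\mu_\b)$, it abandons characteristic functions and uses that $\gamma(T,\cdot)$, being injective and linear, admits a continuous left inverse $\lambda$, so that every $A\in\borel{\X}$ equals $\gamma^{-1}(T,\lambda^{-1}(A))$ and $\mu_\a=\mu_\b$ follows on all of $\borel{\X}$. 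Note, however, that the paper reaches that equality of laws by the very cancellation you worried about (its ``similar reasoning as previously''), so you have identified a real subtlety that the paper glosses over; under a nonvanishing noise characteristic function (e.g., Gaussian noise), both the paper's finish and a corrected version of your coset argument go through, but in full generality this direction does not hold.
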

\par
This guarantees that, for linear systems, \emph{noise preserves the classes of indistinguishability} under the single assumption that the noise is independent of the past trajectory.
Additionally, Kalman's criterion on the observability matrix enables checking distributional observability.
\reworded{\reviewcomment{7.4}%
We emphasize that the equivalence between observability of the noisy and nominal systems is \emph{not} an immediate consequence of the equivalence between distinguishability of pairs of initial states;
distributional observability requires all pairs of distinct initial distributions to be distinguishable rather than initial states.
}
This statement thus means that, for a linear system, distinguishability of all initial states ensures observability; there is no gain in considering initial distributions.
\par
\begin{proof}
    For $\mu\in\probmeasures{\X}$, the superposition principle holds: \begin{equation}
        \Gamma(T, \mu) = \Gamma(T, 0) + \gamma(T, \mu),
    \end{equation}
    where we abusively denote $\gamma(T, X_0)$ with $X_0\sim\mu$ by $\gamma(T, \mu)$.
    Crucially, $\Gamma(T, 0)$ and $\gamma(T, \mu)$ are independent, since the former is driven by noise and the latter by its random initialization.
    Hence, we have $\characteristic{\Outtraj(T, \mu)} = \characteristic{\Outtraj(T, 0)}\cdot\characteristic{\outtraj(T, \mu)}$.
    We thus have the following equivalences, for $x_\a, x_\b\in\X$: they are distributionally indistinguishable \ac{iff} $\lawoutput{x_\a}{T} = \lawoutput{x_\b}{T}$, \ac{iff} $\characteristic{\Outtraj(T, x_\a)} = \characteristic{\Outtraj(T, x_\b)}$, \ac{iff} $\characteristic{\outtraj(T, \Dirac{x_\a})}=\characteristic{\outtraj(T, \Dirac{x_\b})}$, \ac{iff} $\outtraj(T, x_\a) = \outtraj(T, x_\b)$.
    This shows that $x_\a$ and $x_\b$ are distributionally indistinguishable \ac{iff} they are deterministically indistinguishable.\par
    We now show that deterministic observability implies distributional observability, since the converse follows immediately from what precedes.
    Assume deterministic observability, and let $\mu_\a, \mu_\b\in\probmeasures{\X}$ be two initial distributions with $\lawoutput{\mu_\a}{T} = \lawoutput{\mu_\b}{T}$. 
    We show $\mu_\a = \mu_\b$.
    A similar reasoning as previously shows $\characteristic{\outtraj(T, \mu_\a)} = \characteristic{\outtraj(T, \mu_\b)}$, i.e., $\outtraj(T, \mu_\a)$ and $\outtraj(T, \mu_\b)$ have the same law and, thus, $\mu_\a$ and $\mu_\b$ coincide on all sets $A$ in the $\sigma$-algebra $\mathcal{S} = \gamma^{-1}(T, \borel{\Y^{T+1}})\subset\borel{\X}$.
    We show $\mathcal{S} = \borel{\X}$.
    From observability of~\eqref{eq:deterministic system}, $\gamma(T, \cdot)$ is injective.
    Since it is also linear, it admits a continuous (and thus, measurable) left inverse, say, $\lambda:\Y^{T+1}\to\X$.
    We are now ready to conclude.
    Take $A\in\borel{\X}$, and define $B = \lambda^{-1}(A)\in\borel{\Y^{T+1}}$.
    We have, by definition of $\lambda$ as a left inverse of $\gamma$, $\gamma^{-1}(T, B) = \gamma^{-1}(T, \lambda^{-1}(A)) = A$.
    This shows that $A\in\mathcal{S}$, i.e., that $\borel{\X} = \mathcal{S}$.
    Therefore, $\mu_\a(A) = \mu_\b(A)$ for all $A\in\borel{\X}$, which shows distributional observability of~\eqref{eq:stochastic system}.
\end{proof}

Albeit distributional and deterministic distinguishability differ in general, it is reasonable to expect that a small amount of noise only mildly affects the distinguishability of two states.
In the next section, we propose a \emph{quantification} of distinguishability to formalize the above intuition, and a method to estimate it from output data.
\section{Measuring Distributional Observability}
\label{sec:measuring observability}
Distributional \reworded{\reviewcomment{1.1}distinguishability} is defined as the equality of certain probability distributions.
We propose to extend this binary notion to a continuous spectrum of relative distinguishability by quantifying it with the \ac{MMD}, a distance between said distributions measuring how easily they can be told apart.
Additionally, the \ac{MMD} can be estimated from data.
This leads to our main methodological contribution: use output measurements to \begin{enumerate*}
    \item approximate relative distinguishability; and
    \item leverage a two-sample test~\cite{GBR2012} to assess distributional \reworded{\reviewcomment{1.1}distinguishability} with high confidence.
\end{enumerate*}

\subsection{The MMD to Measure Relative Distinguishability}
\label{sec:relative distinguishability}
We start with background necessary to define the \ac{MMD} and state the results that justify the method we propose.

\subsubsection{Background}
Before introducing the~\ac{MMD}, we need to define~\acp{RKHS} and~\acp{KME}.
We focus on a minimal treatment and refer the interested reader to a dedicated survey~\cite{MFSS2017} for an overview. 
\par
First, one picks a positive definite \emph{kernel function} $k:\Y^{T+1}\times\Y^{T+1}\to\R$.
Such a kernel function uniquely determines an underlying Hilbert space of functions $\H$ defined on $\Y^{T+1}$, called~\iac{RKHS}.
Then, one embeds a given probability distribution $\P\in\probmeasures{\Y^{T+1}}$ into this~\ac{RKHS} through its \ac{KME} $\kme{\P} = \E_{X\sim\P}[k(\cdot, X)]$.
The~\ac{MMD} between two distributions $\P$ and $\Q$ is then the distance in the~\ac{RKHS} between their embeddings:
\begin{align}
    \mmd[\P, \Q] = \norm{\kme{\P} - \kme{\Q}}_\H,
    \label{eq:mmd}
\end{align}
where $\norm{\cdot}_\H$ is the norm in the \ac{RKHS} $\H$.
In general, the~\ac{MMD} is only a semi-metric, since the embedding~$\P\mapsto\mu_\P$ may not be injective.
Kernels for which it \emph{is} injective are called \emph{characteristic}\,\cite{SFL2011}; they include the Gaussian kernel.
From now on, we assume a characteristic kernel.

\begin{remark}[A Kernel on Trajectories]
    We emphasize that the considered kernel $k$ is defined on the output trajectory space $\Y^{T+1}$, and not on the state space $\X$ nor on the output space $\Y$.
    For instance, the expression of a Gaussian kernel on $\Y^{T+1}$ with scalar width $\sigma>0$, which is known to be characteristic~\cite[Table\,3.1]{MFSS2017}, is
    \begin{equation}\label{eq:gaussian rbf}
        k(\outtraj_\a, \outtraj_\b) = \exp\left[-\frac{1}{2\sigma^2} \sum_{t=0}^T\sum_{i=1}^{\ny}(\outtraj_{\a,i,t} - \outtraj_{\b,i,t})^2\right],
    \end{equation}
    where $\outtraj_\a,\outtraj_\b\in\R^{(T+1)\times\ny}$ are output trajectories.
\end{remark}
\vspace{2mm}

\subsubsection{A Metric of Distinguishability}\label{sec:interpretation mmd spectrum}
The value of the \ac{MMD} thus indicates whether initial distributions are distinguishable.
\begin{proposition} \label{prop:mmd metric}
    Let $\mu_\a, \mu_\b$ be two initial distributions.
    Assume that the kernel $k$ is characteristic. 
    Then, $\mmd[\lawoutput{\mu_\a}{T}, \lawoutput{\mu_\b}{T}] = 0$ \ac{iff} $\mu_\a$ and $\mu_\b$ are distributionally indistinguishable.
\end{proposition}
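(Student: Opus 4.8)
The plan is to reduce the statement to the defining property of a characteristic kernel through a short chain of equivalences. First I would unfold the definition of the \ac{MMD} in~\eqref{eq:mmd}: since $\mmd[\lawoutput{\mu_\a}{T}, \lawoutput{\mu_\b}{T}] = \norm{\kme{\lawoutput{\mu_\a}{T}} - \kme{\lawoutput{\mu_\b}{T}}}_\H$ is a genuine norm in the \ac{RKHS} $\H$, it vanishes \ac{iff} its argument is the zero element, that is, \ac{iff} the two \acp{KME} coincide, $\kme{\lawoutput{\mu_\a}{T}} = \kme{\lawoutput{\mu_\b}{T}}$.

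Next I would invoke the characteristic property of the kernel. By assumption $k$ is characteristic, which by definition means the embedding $\P\mapsto\kme{\P}$ is injective on $\probmeasures{\Y^{T+1}}$. Hence equality of the embeddings is equivalent to equality of the embedded laws themselves, i.e., $\kme{\lawoutput{\mu_\a}{T}} = \kme{\lawoutput{\mu_\b}{T}}$ \ac{iff} $\lawoutput{\mu_\a}{T} = \lawoutput{\mu_\b}{T}$. Combining the two steps gives $\mmd[\lawoutput{\mu_\a}{T}, \lawoutput{\mu_\b}{T}] = 0$ \ac{iff} $\lawoutput{\mu_\a}{T} = \lawoutput{\mu_\b}{T}$.

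Finally, I would translate this back into the language of distinguishability. Definition~\ref{def:distributional observability} states precisely that $\mu_\a$ and $\mu_\b$ are distributionally indistinguishable \ac{iff} their output laws $\lawoutput{\mu_\a}{T}$ and $\lawoutput{\mu_\b}{T}$ are equal. Chaining this with the previous equivalence yields the claim.

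The only point that requires care — and the closest thing to an obstacle — is ensuring that the two \acp{KME} are well-defined elements of $\H$, so that the norm in~\eqref{eq:mmd} is meaningful and the injectivity statement applies to the laws at hand. This is immediate for a bounded kernel, such as the Gaussian kernel~\eqref{eq:gaussian rbf} adopted here, since boundedness guarantees that $\E_{X\sim\P}[k(\cdot, X)]$ exists in $\H$ for every $\P\in\probmeasures{\Y^{T+1}}$; I would simply record this fact (citing~\cite{MFSS2017}) before running the chain of equivalences. Beyond this integrability remark, the result is an essentially immediate corollary of the definition of a characteristic kernel combined with Definition~\ref{def:distributional observability}.
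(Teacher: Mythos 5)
Your proof is correct and follows essentially the same route as the paper: the paper cites the fact that the \ac{MMD} is a metric for characteristic kernels and then invokes Definition~\ref{def:distributional observability}, while you merely unfold that cited fact into its two constituent steps (the norm vanishes \ac{iff} the embeddings coincide, and injectivity of the embedding transfers this to equality of the laws). Your additional remark on well-definedness of the \acp{KME} for bounded kernels is a harmless technical completion, not a different argument.
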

\begin{proof}
The~\ac{MMD} is a metric when the kernel is characteristic~\cite{SFL2011}, which implies that $\mmdsq[\lawoutput{\mu_\a}{T}, \lawoutput{\mu_\b}{T}] = 0$ \ac{iff} $\lawoutput{\mu_\a}{T} = \lawoutput{\mu_\b}{T}$. 
The result follows from Definition~\ref{def:distributional observability}.
\end{proof}
\par
The information the \ac{MMD} carries is richer than a simple yes-or-no indication.
Indeed, it takes continuous values; low values reveal similar distributions.
Therefore, the value of the \ac{MMD} extends the binary notion of \emph{absolute} distinguishability to a continuous spectrum of \emph{relative} distinguishability on which some pairs of initial distributions are more (or less) distinguishable than others.
We postpone further interpreting this value to Section~\ref{sec:interpretaion mmd data}

\subsection{Finite-Sample Approximation}
To access both the relative and absolute notions, one needs to evaluate the~\ac{MMD}.
We rely on a finite-sample approximation~\cite{GBR2012} for this and use a two-sample test to conclude on (absolute) distinguishability, since the approximation is never exactly $0$.
Finally, this test helps refine the interpretation of the \ac{MMD} value.

\subsubsection{Background -- Finite-Sample MMD}
While a closed-form formula is often unattainable or requires perfect system knowledge, the~\ac{MMD} can be approximated from independent samples of output trajectories by the following empirical estimator.
\begin{theorem}[{\citet{GBR2012}}]\label{thm:estimator}
    Let $\mu_\a, \mu_\b$ be two initial distributions.
    Let $\Outtraj_{\a,1},\dots, \Outtraj_{\a,m}$ be independent copies of $\Outtraj(\mu_\a, T)$, and $\Outtraj_{\b,1}, \dots, \Outtraj_{\b,n}$ independent copies of $\Outtraj(\mu_\b, T)$.
    Assume that $k$ is bounded, i.e., $0 \leq k \leq B$ for some $B\in\R$.
    Then,
    \begin{align}
        \mmdb^2[m, n] & = \frac{1}{m^2}\sum_{i,j=1}^m k(\Outtraj_{\a,i}, \Outtraj_{\a,j}) 
             + \frac{1}{n^2}\sum_{i,j=1}^n k(\Outtraj_{\b,i}, \Outtraj_{\b,j})
             \notag
             \\
            &\qquad\quad - \frac{2}{mn} \sum_{i,j=1}^{m,n}k(\Outtraj_{\a,i}, \Outtraj_{\b,j}),
        \label{eq:mmd biased estimator}
    \end{align}
    converges in probability to $\mmd^2[\lawoutput{\mu_\a}{T}, \lawoutput{\mu_\b}{T}]$ with the following concentration bound for $\varepsilon>0$:
    \begin{align}
        \P&\left[
            \Big|\mmdb[m,n] - \mmd[\lawoutput{\mu_\a}{T}, \lawoutput{\mu_\b}{T}]\Big|
            > 
            2\left(\sqrt{\frac{B}{m}}+\sqrt{\frac{B}{n}}\right) + \varepsilon
        \right]~
        \notag
        \\
        & \hspace{3em} \leq
        2\exp\left[
            -\frac{\varepsilon^2}{2B}\cdot\frac{mn}{m+n}
        \right] .
        \label{eq:concentration bound}
    \end{align}
\end{theorem}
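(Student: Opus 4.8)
The plan is to recognize the estimator~\eqref{eq:mmd biased estimator} as a squared distance between empirical mean embeddings in $\H$, and then control its deviation from $\mmd[\lawoutput{\mu_\a}{T}, \lawoutput{\mu_\b}{T}]$ by a bounded-differences argument. Writing $P_\a = \lawoutput{\mu_\a}{T}$ and $P_\b = \lawoutput{\mu_\b}{T}$, I would introduce the empirical embeddings $\widehat\mu_{P_\a} = \frac{1}{m}\sum_{i=1}^m k(\cdot, \Outtraj_{\a,i})$ and $\widehat\mu_{P_\b} = \frac{1}{n}\sum_{i=1}^n k(\cdot, \Outtraj_{\b,i})$ in $\H$. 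Expanding $\norm{\widehat\mu_{P_\a} - \widehat\mu_{P_\b}}_\H^2$ and using the reproducing property $\langle k(\cdot, \gamma), k(\cdot, \gamma')\rangle_\H = k(\gamma, \gamma')$ recovers~\eqref{eq:mmd biased estimator} term by term, so that $\mmdb[m,n] = \norm{\widehat\mu_{P_\a} - \widehat\mu_{P_\b}}_\H$, while $\mmd[P_\a, P_\b] = \norm{\mu_{P_\a} - \mu_{P_\b}}_\H$ by~\eqref{eq:mmd}, with $\mu_{P_\a} = \E[k(\cdot, \Outtraj_{\a,1})]$. This embedding identity is the workhorse: it localizes the effect of each sample, which a direct manipulation of the $V$-statistic~\eqref{eq:mmd biased estimator} would obscure.

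Next I would apply McDiarmid's inequality to $\mmdb[m,n]$ viewed as a function of the $m+n$ independent trajectories $\Outtraj_{\a,1},\dots,\Outtraj_{\a,m},\Outtraj_{\b,1},\dots,\Outtraj_{\b,n}$. Replacing a single $\Outtraj_{\a,i}$ perturbs $\widehat\mu_{P_\a}$ by $\frac{1}{m}\bigl(k(\cdot,\Outtraj_{\a,i}') - k(\cdot,\Outtraj_{\a,i})\bigr)$, whose $\H$-norm is at most $\frac{2\sqrt B}{m}$ since $\norm{k(\cdot,\gamma)}_\H = \sqrt{k(\gamma,\gamma)}\le\sqrt B$; by the reverse triangle inequality $\mmdb[m,n]$ then changes by at most $\frac{2\sqrt B}{m}$, and symmetrically by at most $\frac{2\sqrt B}{n}$ for a $\b$-sample. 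The sum of squared bounded-difference constants is thus $m\,(2\sqrt B/m)^2 + n\,(2\sqrt B/n)^2 = 4B(m+n)/(mn)$, and McDiarmid's inequality yields $\P\!\left[\,|\mmdb[m,n] - \E\mmdb[m,n]| > \varepsilon\,\right] \le 2\exp\!\left(-\frac{\varepsilon^2 mn}{2B(m+n)}\right)$, which already matches the exponent in~\eqref{eq:concentration bound}.

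It then remains to bound the bias $|\E\mmdb[m,n] - \mmd[P_\a,P_\b]|$. Using the reverse triangle inequality once more gives $|\mmdb[m,n] - \mmd[P_\a,P_\b]| \le \norm{\widehat\mu_{P_\a} - \mu_{P_\a}}_\H + \norm{\widehat\mu_{P_\b} - \mu_{P_\b}}_\H$; taking expectations and applying Jensen's inequality, $\E\norm{\widehat\mu_{P_\a} - \mu_{P_\a}}_\H \le (\E\norm{\widehat\mu_{P_\a} - \mu_{P_\a}}_\H^2)^{1/2}$, and since $\widehat\mu_{P_\a}$ is an average of $m$ \iid terms of mean $\mu_{P_\a}$, its mean-square deviation is $\frac{1}{m}(\E[k(\Outtraj_{\a,1}, \Outtraj_{\a,1})] - \norm{\mu_{P_\a}}_\H^2) \le B/m$. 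Hence $|\E\mmdb[m,n] - \mmd[P_\a,P_\b]| \le \sqrt{B/m} + \sqrt{B/n}$, which is dominated by the stated $2(\sqrt{B/m}+\sqrt{B/n})$. Combining the two estimates through the triangle inequality, the event $|\mmdb[m,n] - \mmd[P_\a,P_\b]| > 2(\sqrt{B/m}+\sqrt{B/n}) + \varepsilon$ forces $|\mmdb[m,n] - \E\mmdb[m,n]| > \varepsilon$, giving~\eqref{eq:concentration bound}; convergence of $\mmdb^2[m,n]$ to $\mmd^2[P_\a,P_\b]$ in probability then follows, since~\eqref{eq:concentration bound} drives $\mmdb[m,n]$ to $\mmd[P_\a,P_\b]$ in probability as $m,n\to\infty$ and squaring is continuous. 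I expect the only delicate point to be the bounded-differences step: the clean constant $2\sqrt B/m$ relies on passing to the embedding representation above, whereas estimating the influence of one sample directly in the $O(m^2)$-term expansion~\eqref{eq:mmd biased estimator} would be both messier and looser.
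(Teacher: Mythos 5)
Your proof is correct, but there is nothing in the paper to compare it against: the theorem is imported verbatim from \citet{GBR2012} (their Theorem~7), as the bracketed attribution in the theorem header indicates, and the paper supplies no proof of its own. Your argument is essentially a reconstruction of the proof in that cited source: the same two-step decomposition into a McDiarmid concentration bound around $\E\,\mmdb[m,n]$ (with bounded differences $2\sqrt{B}/m$ and $2\sqrt{B}/n$, obtained cleanly via the embedding representation $\mmdb[m,n]=\norm{\widehat\mu_{P_\a}-\widehat\mu_{P_\b}}_\H$) plus a bound on the bias $|\E\,\mmdb[m,n]-\mmd[\lawoutput{\mu_\a}{T},\lawoutput{\mu_\b}{T}]|$. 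The only substantive difference is that your Jensen-based bias bound $\sqrt{B/m}+\sqrt{B/n}$ is tighter than the $2\left(\sqrt{B/m}+\sqrt{B/n}\right)$ in the statement (the original reference incurs the factor $2$ through symmetrization in the function-class view of the MMD); since a tighter bias bound only enlarges the slack in the final event inclusion, your derivation of~\eqref{eq:concentration bound}, and of convergence in probability via the continuous-mapping step, goes through as written.
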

\par
\vspace{0.5em}
The outcome of~\eqref{eq:mmd biased estimator} is a noisy, biased, and finite-sample approximation of the population value $\mmd[\lawoutput{\mu_\a}{T},\lawoutput{\mu_\b}{T}]$.
We can rely on the value of the estimator to approximate relative distinguishability, however, we cannot use it directly to check absolute distinguishability.
Instead, we use a statistical test.

\subsubsection{A Two-Sample Test of Distinguishability}

To determine whether two initial distributions $\mu_\a$ and $\mu_\b$ are distinguishable from the value of~\eqref{eq:mmd biased estimator}, we perform a statistical test of the null hypothesis \begin{equation}
    H_0:~\lawoutput{\mu_\a}{T} = \lawoutput{\mu_\b}{T}\quad\text{vs.}\quad~H_1:~\lawoutput{\mu_\a}{T} \neq \lawoutput{\mu_\b}{T}.
\end{equation}
Such a test for equality of distributions is called a \emph{two-sample test}.
We propose to use one based on the concentration bound~\eqref{eq:concentration bound}\,\cite{GBR2012}.
One specifies an acceptable Type I error\footnote{\added{\reviewcomment{7.5}The Type I error is the probability of incorrectly rejecting $H_0$, whereas the Type II error is the probability of incorrectly failing to reject it.}} $\alpha\in(0,1)$, and the test provides an \emph{acceptance region threshold}; if the outcome of~\eqref{eq:mmd biased estimator} falls outside of that region, one rejects the null hypothesis $H_0$ with confidence level at least $1 - \alpha$.

\begin{proposition} \label{thm:test}
    Let $\alpha\in(0,1)$. Under the setting of Theorem~\ref{thm:estimator} with $n=m$, consider a realization of $\mmdb[m,m]$.
    If this empirical estimate is such that
    \begin{equation} \label{eq:two sample test}
        \mmdb[m,m] \geq \sqrt{\frac{2B}{m}}\left(1 + \sqrt{2\ln \alpha^{-1}}\right) \dot{=}\,\thresh,
    \end{equation}
    then $\mu_\a$ and $\mu_\b$ are distributionally distinguishable in time $T$ with confidence level $1 - \alpha$.
\end{proposition}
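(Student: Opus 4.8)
The plan is to read this proposition as the MMD two-sample test of \cite{GBR2012}, specialized to the output-trajectory laws $\lawoutput{\mu_\a}{T}$ and $\lawoutput{\mu_\b}{T}$, and to interpret ``distinguishable with confidence $1-\alpha$'' as the statement that the decision rule ``declare distinguishability whenever $\mmdb[m,m]\geq\thresh$'' has Type I error at most $\alpha$. Accordingly, I would first fix the null hypothesis $H_0:\lawoutput{\mu_\a}{T}=\lawoutput{\mu_\b}{T}$ and note that, by Proposition~\ref{prop:mmd metric} (the kernel being characteristic), $H_0$ is equivalent to $\mmd[\lawoutput{\mu_\a}{T},\lawoutput{\mu_\b}{T}]=0$. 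The entire argument then reduces to bounding, \emph{under $H_0$}, the probability that the estimator $\mmdb[m,m]$ exceeds $\thresh$.

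The core is a one-sided concentration inequality for $\mmdb[m,m]$ around its mean, refined by a control of that mean under $H_0$. Viewing the biased estimator as a function of the $2m$ independent trajectories, perturbing a single one changes $\mmdb[m,m]$ by at most $2\sqrt{B}/m$ (since the feature map has squared norm bounded by $B$), so McDiarmid's inequality gives $\P[\mmdb[m,m]-\E\mmdb[m,m]>\varepsilon]\leq\exp(-\varepsilon^2 m/(4B))$. Under $H_0$ the vanishing of the population MMD forces $\E[\mmdb^2[m,m]]\leq 2B/m$, whence Jensen's inequality yields $\E\mmdb[m,m]\leq\sqrt{2B/m}$. Combining the two bounds, under $H_0$,
\begin{equation*}
  \P\big[\mmdb[m,m]>\sqrt{2B/m}+\varepsilon\big]\leq\exp\!\big(-\varepsilon^2 m/(4B)\big).
\end{equation*}

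I would then \emph{calibrate} $\varepsilon$ so that the right-hand side equals $\alpha$, i.e.\ take $\varepsilon=2\sqrt{(B/m)\ln\alpha^{-1}}=\sqrt{2B/m}\cdot\sqrt{2\ln\alpha^{-1}}$. Substituting this choice collapses the cutoff to exactly $\sqrt{2B/m}\,(1+\sqrt{2\ln\alpha^{-1}})=\thresh$, so that under $H_0$ the event $\{\mmdb[m,m]\geq\thresh\}$ has probability at most $\alpha$. Contrapositively, if an observed realization satisfies $\mmdb[m,m]\geq\thresh$, then rejecting $H_0$ --- that is, declaring $\mu_\a$ and $\mu_\b$ distributionally distinguishable in time $T$ --- is erroneous with probability at most $\alpha$, which is the asserted confidence level $1-\alpha$. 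One could route the same skeleton through the already-stated bound~\eqref{eq:concentration bound} with the population MMD set to zero, but that looser sampling term would not reproduce the exact constant in $\thresh$; the sharp two-step route above is what yields it.

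The step I expect to be most delicate is matching the precise constant in $\thresh$: it hinges on (i) the correct bounded-differences constant for $\mmdb[m,m]$, which fixes the factor $1/(4B)$ in the exponent, and (ii) the bound $\E\mmdb[m,m]\leq\sqrt{2B/m}$, which supplies the additive $\sqrt{2B/m}$ and crucially uses that the population discrepancy is \emph{exactly} zero under $H_0$. A secondary, purely interpretive subtlety is that distinguishability is a deterministic property of the two laws, so ``confidence $1-\alpha$'' must be understood as a frequentist guarantee on the test decision (control of the Type I error); the statement says nothing about the Type II error or the test's power, which is a genuine limitation of the result rather than a gap in its proof.
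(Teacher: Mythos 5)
Your proof is correct and is, in substance, the same as the paper's: the paper's entire proof is a one-line citation of \cite[Corollary\,9]{GBR2012}, and what you have written out is precisely the proof of that corollary --- McDiarmid's inequality with bounded-difference constant $2\sqrt{B}/m$ giving the exponent $\varepsilon^2 m/(4B)$, the bias bound $\E\,\mmdb[m,m]\le\sqrt{2B/m}$ under $H_0$ via $\E\,\mmdb^2[m,m]\le 2B/m$ and Jensen, and the calibration $\varepsilon=2\sqrt{(B/m)\ln\alpha^{-1}}$ recovering exactly $\thresh$. Your side remark is also accurate: routing through the paper's two-sided bound~\eqref{eq:concentration bound} with the population \mmd{} set to zero would only yield the looser threshold $\sqrt{B/m}\left(4+2\sqrt{\ln(2/\alpha)}\right)$, so the sharp constant in $\thresh$ genuinely requires the dedicated one-sided argument under the null, exactly as you carried it out.
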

\begin{proof}
This immediately follows from~\cite[Corollary\,9]{GBR2012}.
\end{proof}
The threshold $\kappa$ is conservative for many practical applications.
It is common to use \emph{bootstrapping} instead, as suggested in~\cite{GBR2012} and implemented in their original code.
We use this bootstrapped value for $\kappa$ in Section~\ref{sec:numerical}.

\subsubsection{From Qualitative Properties to Sample Bounds} \label{sec:interpretaion mmd data}
In Section~\ref{sec:interpretation mmd spectrum}, we interpreted the \ac{MMD} as a measure of distinguishability based on the argument that it takes values in a continuous set.
The two-sample test of Proposition~\ref{thm:test} allows us to refine this interpretation: the value of the \ac{MMD} indicates how much data is required for the test to successfully reject the null hypothesis $H_0$ with high confidence.
In other words, the \ac{MMD} provides a sample bound for the test to achieve a specific Type II error\footnotemark[1] $\beta\in(0, 1)$.
\begin{corollary}
    Let $\beta\in(0, 1)$. Under the setting of Theorem~\ref{thm:estimator} with $n=m$, assume that the null hypothesis is incorrect and let $z = \mmd[\lawoutput{\mu}{T}, \lawoutput{\nu}{T}] > 0$.
    Assume that \begin{equation}
        m > \frac{B}{z^2}\left(4 + \sqrt{2} + 2\left(\sqrt{\ln\frac{1}{\beta}}+\sqrt{\ln\frac{2}{\beta}}\right)\right)^2. \label{eq:minimum data}
    \end{equation}
    Then, the null hypothesis is rejected with probability at least $1 - \beta$.
\end{corollary}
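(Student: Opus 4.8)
The plan is to combine the concentration bound \eqref{eq:concentration bound} of Theorem~\ref{thm:estimator} with the rejection threshold $\kappa$ of Proposition~\ref{thm:test}, specialized to $n=m$, and then solve the resulting inequality for $m$. The test rejects $H_0$ exactly when $\mmdb[m,m]\geq\kappa$, so to lower-bound the rejection probability it suffices to produce a high-probability lower bound on the estimator $\mmdb[m,m]$ and to force that lower bound to exceed $\kappa$. Throughout, $z=\mmd[\lawoutput{\mu}{T},\lawoutput{\nu}{T}]>0$ plays the role of the population value toward which $\mmdb[m,m]$ concentrates.

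First I would specialize \eqref{eq:concentration bound} to $n=m$: the additive slack $2(\sqrt{B/m}+\sqrt{B/m})$ collapses to $4\sqrt{B/m}$, and the exponent simplifies via $\tfrac{m n}{m+n}=\tfrac{m}{2}$ to $-\varepsilon^2 m/(4B)$. Keeping only the lower deviation (the event $\{\mmdb[m,m]<z-4\sqrt{B/m}-\varepsilon\}$ is contained in $\{|\mmdb[m,m]-z|>4\sqrt{B/m}+\varepsilon\}$), this gives
\begin{equation}
    \P\left[\mmdb[m,m] < z - 4\sqrt{\tfrac{B}{m}} - \varepsilon\right] \leq 2\exp\left[-\frac{\varepsilon^2 m}{4B}\right].
\end{equation}
To make the failure probability equal to $\beta$, I would then choose $\varepsilon$ so that $2\exp[-\varepsilon^2 m/(4B)]=\beta$, i.e. $\varepsilon = 2\sqrt{B/m}\,\sqrt{\ln(2/\beta)}$. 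With probability at least $1-\beta$ we therefore have $\mmdb[m,m]\geq z-4\sqrt{B/m}-2\sqrt{B/m}\sqrt{\ln(2/\beta)}$.

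Next I would run the test of Proposition~\ref{thm:test} at level $\alpha=\beta$; this is the implicit choice that produces the $\sqrt{\ln(1/\beta)}$ term in \eqref{eq:minimum data}. Expanding the threshold as $\kappa=\sqrt{2B/m}\,(1+\sqrt{2\ln\beta^{-1}})=\sqrt{2}\sqrt{B/m}+2\sqrt{B/m}\sqrt{\ln(1/\beta)}$, the sufficient rejection condition $z-4\sqrt{B/m}-\varepsilon\geq\kappa$ reads
\begin{equation}
    z \geq \sqrt{\tfrac{B}{m}}\left(4 + \sqrt{2} + 2\sqrt{\ln\tfrac{1}{\beta}} + 2\sqrt{\ln\tfrac{2}{\beta}}\right).
\end{equation}
Isolating $\sqrt{m}$, using $z>0$, and squaring turns this into exactly \eqref{eq:minimum data}; hence whenever $m$ satisfies that strict inequality the high-probability lower bound on $\mmdb[m,m]$ strictly exceeds $\kappa$, so the test rejects with probability at least $1-\beta$.

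The only genuinely delicate point I anticipate is bookkeeping rather than analysis: one must track the identification $\alpha=\beta$, which is left implicit in the statement, and verify that the two logarithmic terms arise from distinct sources—$\sqrt{\ln(1/\beta)}$ from the test level and $\sqrt{\ln(2/\beta)}$ from the concentration failure probability (the factor $2$ inside the second logarithm being the $2$ in front of the exponential). Everything else is routine: substituting $n=m$, solving a linear inequality in $\sqrt{B/m}$, and squaring, with the final step monotone because both sides are positive.
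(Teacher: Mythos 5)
Your proposal is correct and follows essentially the same route as the paper's proof: both combine the concentration bound \eqref{eq:concentration bound} specialized to $n=m$ (with $\varepsilon = 2\sqrt{(B/m)\ln(2/\beta)}$, giving failure probability $\beta$) with the threshold $\kappa$ of Proposition~\ref{thm:test} taken at level $\alpha=\beta$, and derive \eqref{eq:minimum data} from the sufficient condition that the high-probability lower bound $z-4\sqrt{B/m}-\varepsilon$ exceed $\kappa$. The paper phrases this as disjointness of the acceptance region $A_m$ and the concentration event $B_m$, which is the same argument in set-theoretic bookkeeping; your explicit identification of $\alpha=\beta$ and of the two distinct sources of the logarithmic terms matches the paper's implicit choices exactly.
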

\begin{proof}
    Introduce $A_m = \{\mmdb[m,m]<\kappa\}$ the test's acceptance region of level $\beta$ and $B_m = \{\lvert\mmdb[m,m] - z\rvert \leq \delta\}$ the high-probability region of the concentration bound~\eqref{eq:concentration bound}, with \begin{equation}
        \epsilon = 2\sqrt{\frac{B}{m}\ln\frac{2}{\beta}},~\text{and}~ \delta = 2\sqrt{\frac{B}{m}}\left(2 + \sqrt{\ln\frac{2}{\beta}}\right).
    \end{equation}
    If $A_m\cap B_m = \emptyset$, then $H_0$ is rejected with probability at least $1 - \beta$; indeed, we then have $\P[A_m]\leq \P[B_m^\complement]\leq \beta$.
    It is sufficient for $A_m\cap B_m=\emptyset$ that $\kappa < z - \delta$.
    Rearranging yields~\eqref{eq:minimum data}.
\end{proof}

Albeit the threshold~\eqref{eq:minimum data} depends on $z$ and thus cannot be used in practice to guide data collection, it sheds light on the interpretation of the \ac{MMD} by translating a qualitative property (it measures how different outputs from two initial distributions are) into practical considerations (how much data is required to tell them apart).
Intuitively, initial distributions leading to a high~\ac{MMD} can be told apart with confidence given only a small number of output trajectories.
\section{Experimental Results} \label{sec:numerical}

\begin{figure}
    \centering    
    \includegraphics[width=0.7\columnwidth]{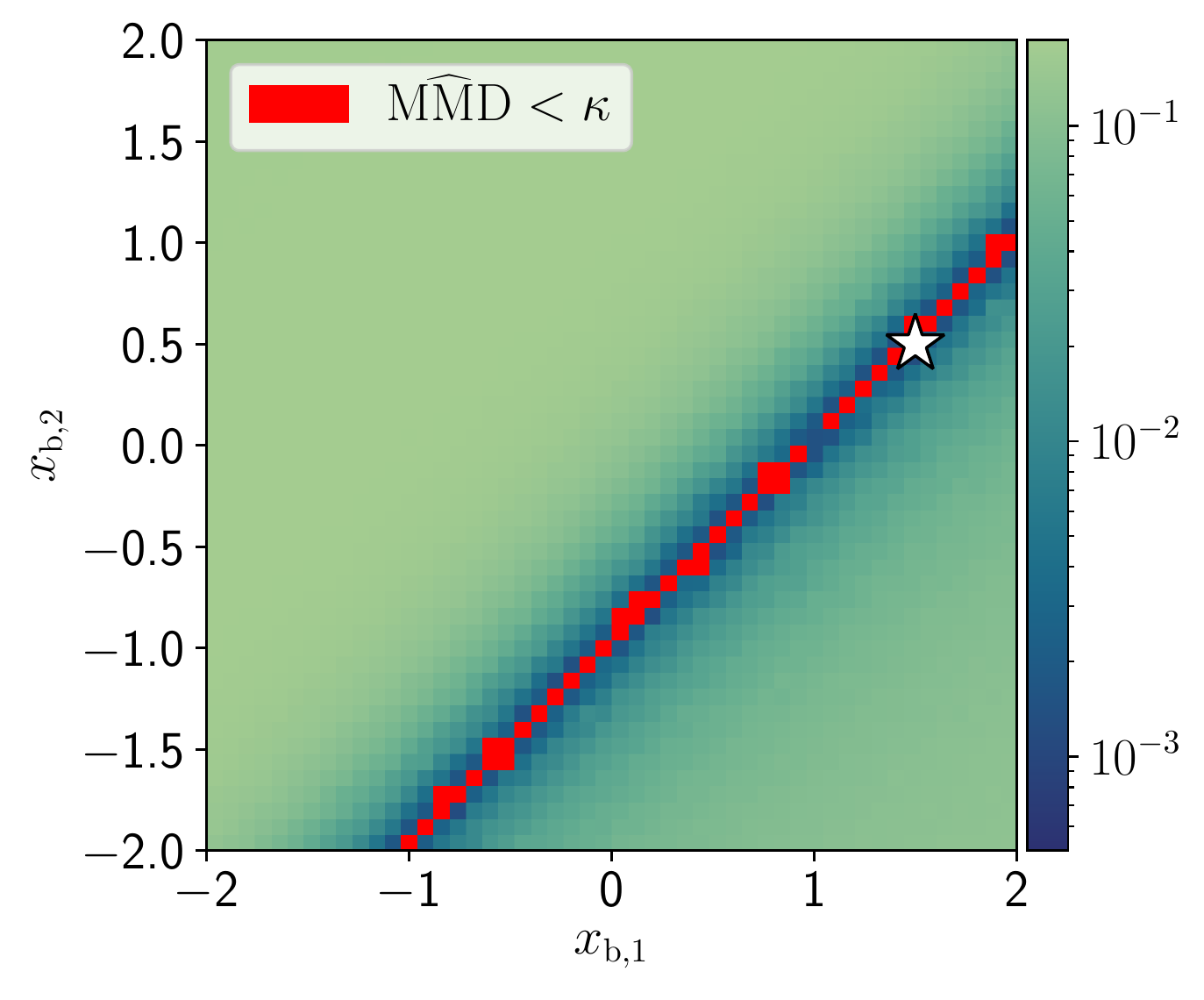}
    \caption{\ac{MMD} over an $x_\b$-grid for linear system~\eqref{def_syst_drift} from reference point $x_\a=(1.5, 0.5)$ (white star). The empirical class of indistinguishability (red points) is computed using output data from the noisy system. It matches the analytical class of the nominal system.}
    \label{fig_drift_syst_MMD}
\end{figure}

The proposed tools enable a data-driven \reworded{\reviewcomment{1.1}distinguishability} analysis of nonlinear stochastic systems.
We first verify that adding noise preserves distributional observability for linear systems, as stated in Theorem~\ref{thm:observability linear}.
We then illustrate the real-valued quantification by computing it on the whole state space for a nonlinear Duffing oscillator, revealing a continuous increase as we get away from the class of indistinguishability.
This also enables measuring the effects of both measurement and process noise on said classes, illustrating the results of Section~\ref{subsec:noise and distinguishability}.
Finally, we demonstrate on a Furuta pendulum how to test whether a sensor configuration distinguishes certain states, with implications on experiment design.
All results use the Gaussian kernel~\eqref{eq:gaussian rbf} and $\alpha=0.05$.
\added{\reviewcomment{7.7,\\10.2,10.4}
We do not report a hyperparameter study over $\alpha$ or $T$.
The former is a decision-theoretic parameter whose choice depends on downstream applications; it controls the threshold $\kappa$ and thus the test's outcome, but does not influence MMD values.
The effect of the parameter $T$ was found negligible\footnote{Code to reproduce the results and a hyperparameter study is available at~\url{https://github.com/Data-Science-in-Mechanical-Engineering/data-obs.git}.}.}
\begin{remark}
We focused on discrete-time systems; yet, dynamical systems are often described in continuous-time and only the measurements are discrete-time.
This discrepancy is not problematic as the results of Section~\ref{sec:measuring observability} hold if the state is a continuous-time process.
\end{remark}

\begin{figure*}
    \centering
    \includegraphics[width=0.72\textwidth]{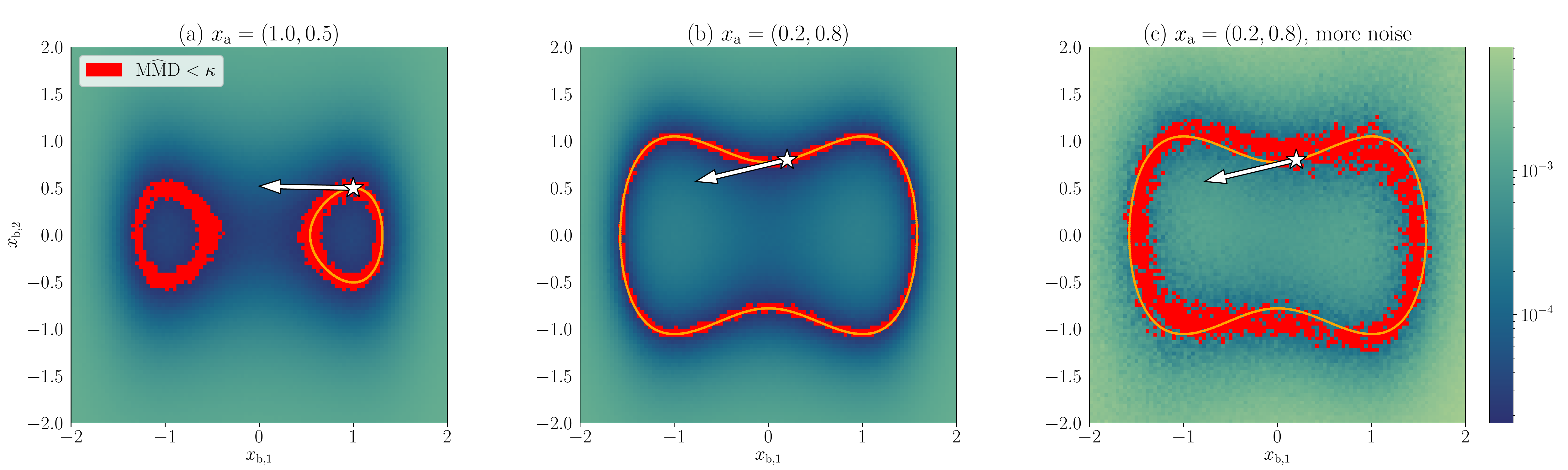}
    \caption{\ac{MMD} over an $x_\b$-grid for the Duffing oscillator~\eqref{eq:duffing} for different values of $x_\a$ (white stars). The trajectory without noise (orange) initialized in $x_\a$ is known to be a subset of the class of indistinguishability. The states where the test did not trigger (red points) constitute the empirical class of indistinguishability. In (a), this class contains not only the trajectory starting from $x_\a$, but also its symmetric w.r.t. the origin. In (c), the empirical class of distinguishability differs due to the significant process noise ($b_1=b_2=0.5$ rather than $0.05$).
    In all cases, vectors generating the null space of the nominal system's empirical Gramian (white arrow) are tangent to its class of indistinguishability.}
    \label{fig_MMD_heatmaps_duffing}
\end{figure*}
\begin{remark}
A common heuristic to select the width parameter $\sigma$ is to pick the median pairwise distance between the data sets at hand\,\cite{GBR2012}.
It is not directly applicable here, however: our interpretation of the~\ac{MMD} as a measure of relative distinguishability (Section~\ref{sec:relative distinguishability}) requires that $\sigma$ be the same for all points to be compared.
Indeed, comparing different values of~\ac{MMD} is only meaningful if the underlying embeddings are all in the \emph{same}~\ac{RKHS}, that is, use the same kernel function and parameter values.
Therefore, we suggest the following meta-heuristic. 
We start by computing the above heuristic for each pair of data sets of interest. 
Then, we set $\sigma$ as the 10-quantile of all previous values. 
These low values correspond to pairs of initial states that are indistinguishable, or hardly distinguishable, so $\sigma$ has the correct order of magnitude for the noise of the system.
We run all experiments with the fixed value of $\sigma$ given by this meta-heuristic, 
\reworded{\reviewcomment{10.8}which is computed automatically in Sections\,\ref{subsec:example linear} and~\ref{subsec:example duffing}, and reported in Table~\ref{table_qube_results} for Section~\ref{subsec:qube}.}
\removed{\reviewcomment{7.6}%
This leads to satisfying results in all our use cases, however, it is but an arbitrary choice; one should ensure $\sigma$ is set to a suitable order of magnitude or the problem might be ill-conditioned and all \ac{MMD} values close to $0$.
}
\end{remark}

\subsection{Influence of Noise on Observability} \label{subsec:example linear}

Theorem~\ref{thm:observability linear} states that two states are distributionally distinguishable for a linear stochastic system \ac{iff} they are distinguishable for the corresponding deterministic system.
This can be interpreted as follows: for linear systems, the presence of noise does not alter distinguishability between two states and, by extension, observability.
We verify this numerically on the following linear system:
\begin{equation} \label{def_syst_drift}
    \begin{split}
        \d X_t & = A X_t \d t + A_0 \sin(\omega t) \d t  + \Sigma \d W_t ,\\
        Y_t & = C  X_t + \measnoise_t,\\
        A & = \begin{pmatrix} 
            -2 & -1 \\ -1 & -2 
        \end{pmatrix}, \quad
        C = \begin{pmatrix} -1 & 1 \end{pmatrix} ,
    \end{split}
\end{equation}
where $\d W_t$ is a 2-dimensional independent Wiener process, $A_0= (3,\, 3)$, $\omega=2$, $\Sigma=0.1 I_2$, and $\measnoise_t \sim \normal{0}{0.01}$ is Gaussian measurement noise. 
The term $A_0 \sin(\omega t)\d t + \Sigma \d W_t$ is a non-centered process perturbation.
The nominal linear system $(A, C)$ is not observable.
More specifically, the direction $C^\perp = \Span\{(1, 1) \}$ cannot be observed, meaning that the set of states indistinguishable from $x\in\R^2$ is $x + C^\perp$.
This set is called the \emph{class of indistinguishability} of $x$.
We show experimentally that the non-centered noise in~\eqref{def_syst_drift} preserves this class of indistinguishability.
We emphasize that the process noise has nonzero mean over the simulation time.

We select a reference point $x_\a = (1.5, 0.5)$ and simulate $n=30$ output trajectories for $\SI{2}{\second}$ starting from $x_\a$, with time steps of length $\dt = \SI{0.01}{\second}$, yielding discrete trajectories of length $T=200$. 
These trajectories constitute a data set of samples from $\lawoutput{x_\a}{T}$. 

We then build a grid of $50 \times 50$ points and simulate $m=30$ output trajectories starting from each point $x_\b$ in the grid. We compute the \ac{MMD} between the data set from $x_\a$ and the one for each $x_\b$.
This yields a heatmap of relative distinguishability depicted in Figure~\ref{fig_drift_syst_MMD}.
We observe that the \ac{MMD} is minimal along the diagonal $x + C^\perp$, and increases as the query point gets farther. 
This shows that the farther from the diagonal a state is, the more distinguishable it is from $x_\a$.
In fact, the test of Proposition~\ref{thm:test} does not trigger along this diagonal (red dots), and triggers everywhere else, empirically identifying this set as the class of \emph{distributional} indistinguishability of $x_\a$ given the output data. 
This class indeed corresponds to the class of \emph{deterministic} indistinguishability of $x_\a$ for the unperturbed version of~\eqref{def_syst_drift} (computed analytically), as predicted by Theorem~\ref{thm:observability linear}.

\begin{figure*}
    \centering
    \begin{subfigure}{0.15\textwidth}
        \centering
        \input{img/Fig3/subfigure_a.tikz}
        \caption{Furuta pendulum}
        \label{fig:scheme furuta}
    \end{subfigure}
    \hspace{2em}
    \begin{subfigure}{0.3\textwidth}
        \centering
        \includegraphics[width=\textwidth]{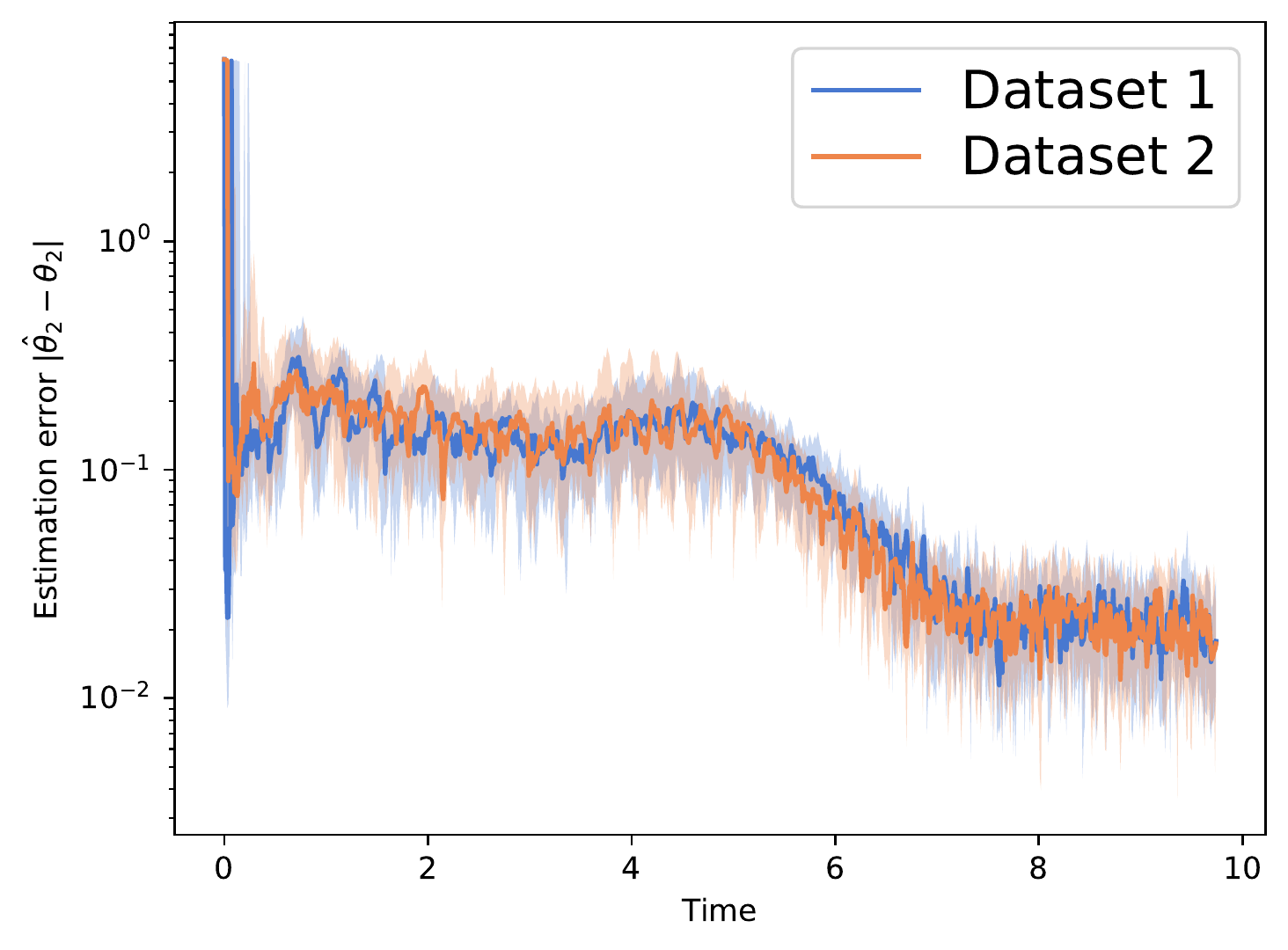}
        \includegraphics[width=\textwidth]{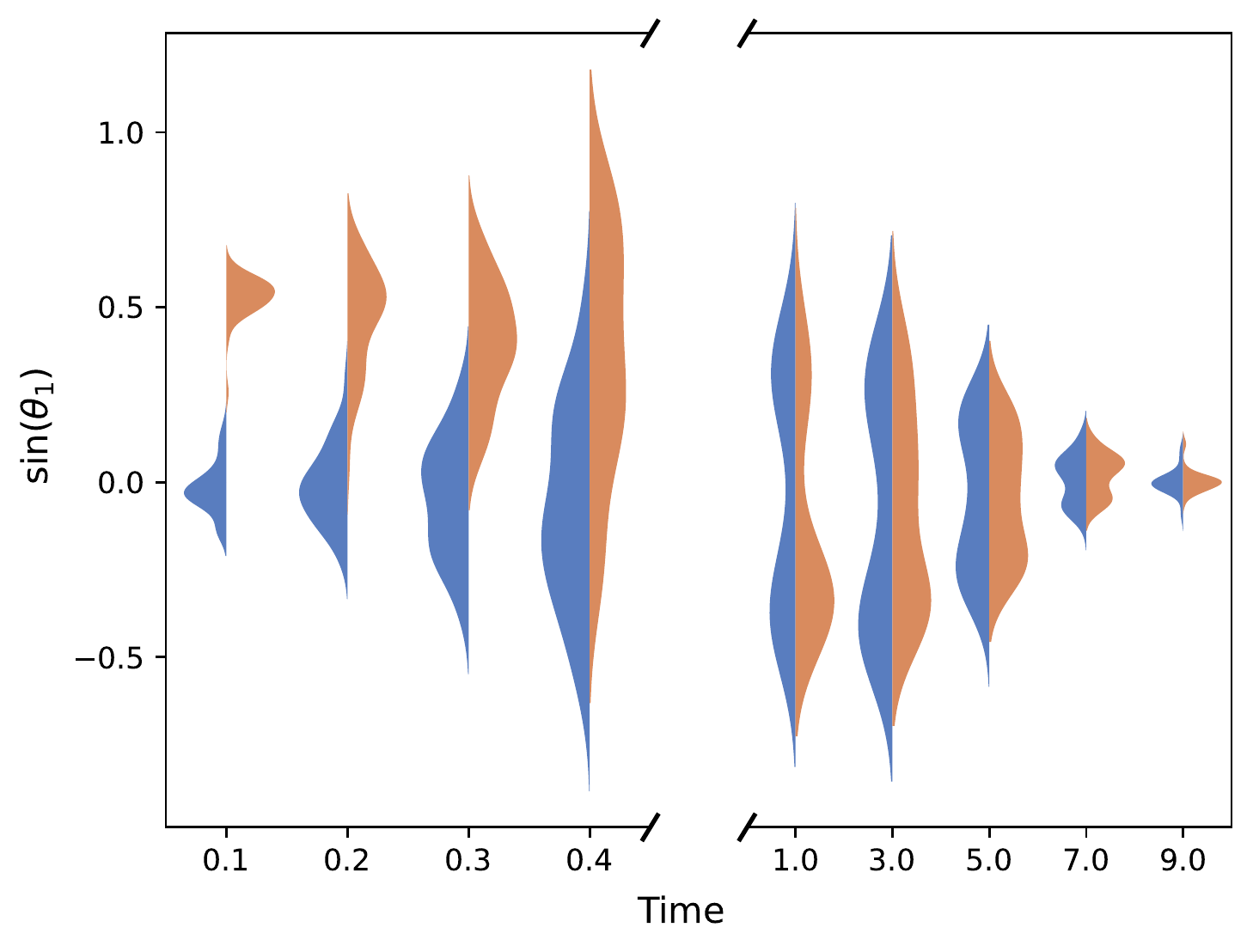}
        \caption{$y=\theta_1$}
        \label{fig:violin distinguishable}
    \end{subfigure}
    \begin{subfigure}{0.3\textwidth}
        \centering
        \includegraphics[width=\textwidth]{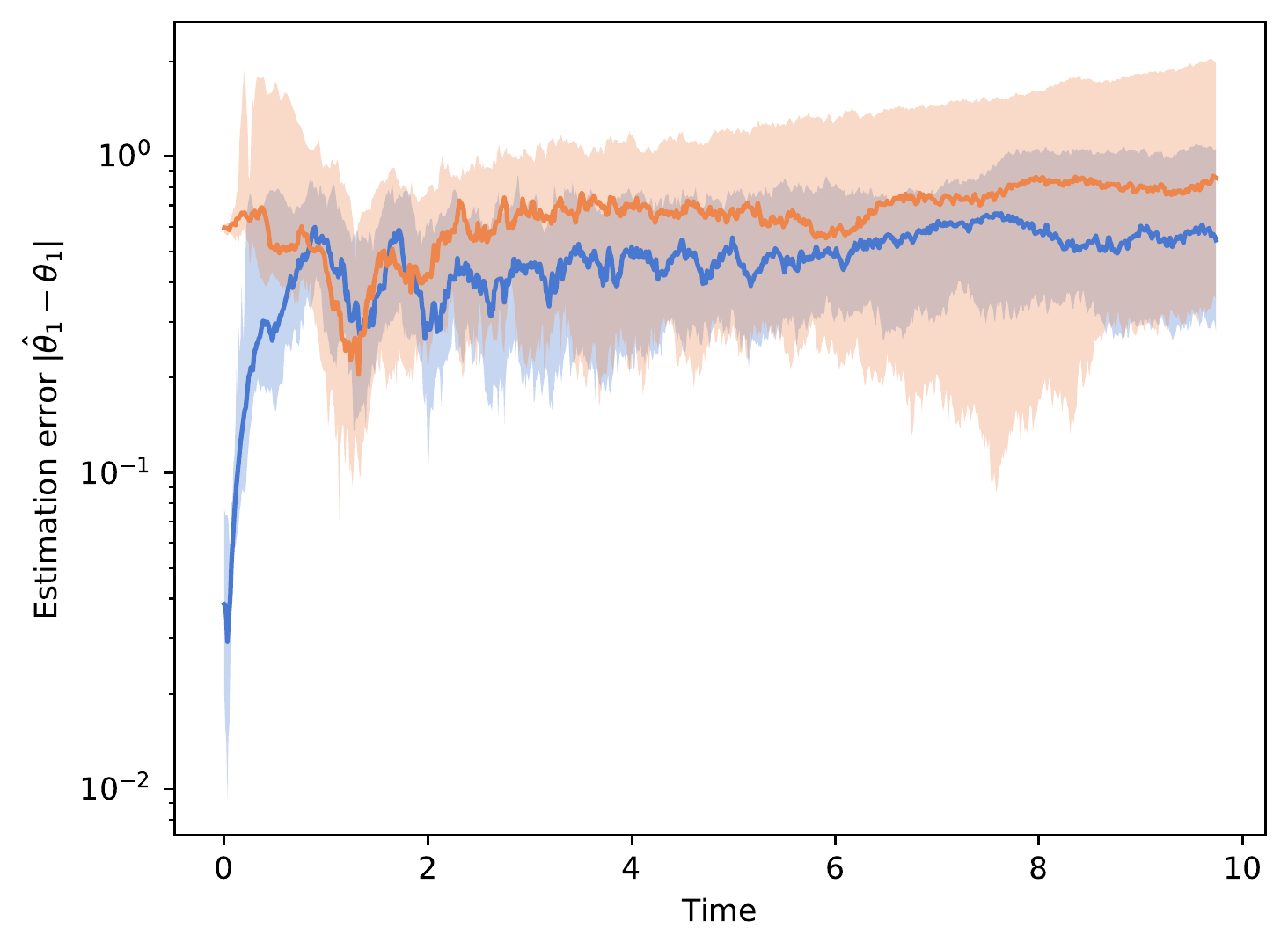}
        \includegraphics[width=\textwidth]{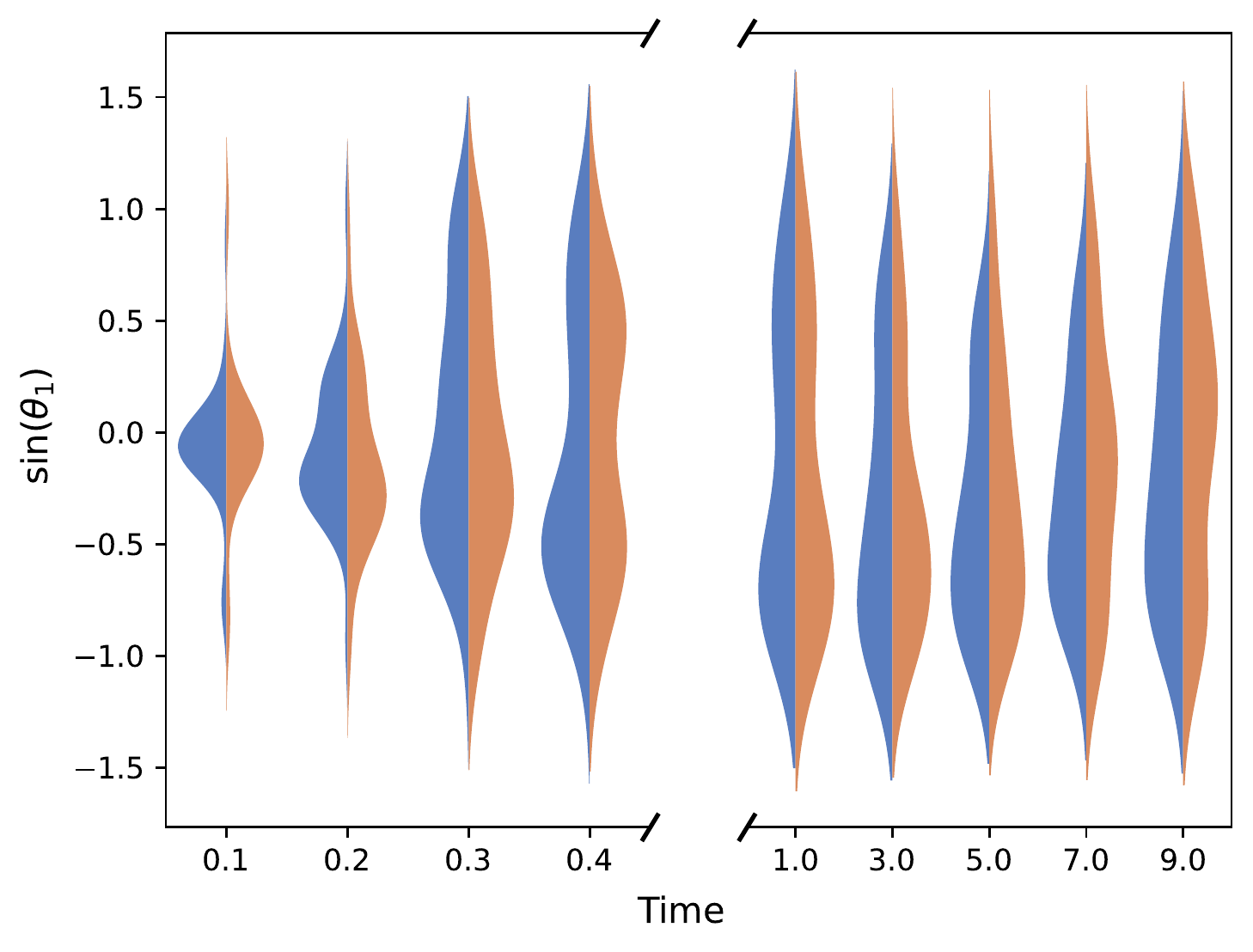}
        \caption{$y=\theta_2$}
        \label{fig:violin indistinguishable}
    \end{subfigure}
    \caption{
        Scheme of the Furuta pendulum (a), and behavior of the \ac{EKF} over time (b and c): estimation error (median and quartiles) \added{of the non-measured state} (top), and distribution of the estimated $\sin(\theta_1)$ (bottom).
        In the distinguishable case (b), the EKF converges \added{(top)} and the distributions of the predictions differ during the transient \added{(bottom)}.
        In the indistinguishable case (c), the estimation error remains high \added{(top)} and the distributions similar \added{(bottom)}; the output of the observer is statistically the same for both initial states.
        Distributions (bottom) are interpolated by kernel density estimation.
    }%
    \added{\reviewcomment{7.8}}
\label{fig_quanserqube_EKF}
\end{figure*}

\subsection{Analyzing \reworded{Distinguishability} in the State Space}\label{subsec:example duffing}

The proposed approach can be used to determine empirically the class of (distributional) indistinguishability of any state for an arbitrary nonlinear system, and to observe the continuous increase in relative distinguishability as we get farther from that class. We illustrate this with an undamped, unforced Duffing oscillator, often used in nonlinear observer design~\cite{Astolfi_uniting_observers_long,buisson2022towards}:
\begin{equation}\label{eq:duffing}
\begin{aligned}
    \d X_1 & = X_2 \d t + b_1 \d W_{1,t},
    \\
    \d X_2 & = ( X_1  -  X_1^3 ) \d t + b_2 \d W_{2,t},
\end{aligned}
\end{equation}
where $\d W_{1,t}$, $\d W_{2,t}$ are independent Wiener processes.
The nominal system is Hamiltonian and conserves the quantity
\begin{align}
h (x) & = -\frac{1}{2} x_1^2 + \frac{1}{2} x_2^2 + \frac{1}{4} x_1^4
\end{align}
along trajectories. We consider the output map $Y_t=h(X_t) + \measnoise_t$ with $\measnoise$ centered Gaussian measurement noise of variance $0.5$, and two noise configurations: $b_1=b_2=0.05$ (low process noise), and $b_1=b_2=0.5$ (high process noise).
Given an arbitrary point $x_\a$, we aim to determine its class of indistinguishability from samples of the stochastic system.
For this, we generate $n=50$ output trajectories of $\SI{1}{\second}$ initialized in $x_\a$, and $m=50$ trajectories in each other point $x_\b$ on a $100 \times 100$ grid, with $\dt=\SI{0.001}{\second}$ ($T=1000$).
We then compute the empirical \ac{MMD}~\eqref{eq:mmd biased estimator} between the resulting data sets.
\par 
Fig.~\ref{fig_MMD_heatmaps_duffing} shows the resulting \ac{MMD} maps for two values of $x_\a$, and the two levels of process noise. 
For low process noise (a and b), the \ac{MMD} increases as the discrepancy in $h$ in the initial states does; the states get more easily distinguishable.
Additionally, the empirical class of indistinguishability (red squares) recovers the nominal system's one, which indeed contains the deterministic trajectory starting from $x_\a$ (orange curve), where $h$ is constant.
For higher levels of process noise \added{(c)}, the empirical class is deformed, \added{\reviewcomment{10.3}showing empirically that process noise can alter classes of indistinguishability and hereby }illustrating that Corollary~\ref{coro:no process noise} does not generalize without further assumptions.
This continuous effect of the level of process noise on the classes highlights the interest of a continuous-valued metric of relative distinguishability; states identified as distinguishable (the test triggers) may still be \emph{difficult} to distinguish.
\par 
\begin{remark}
\label{rem:Gramians}
For comparison, we also compute the nominal system's empirical observability Gramian at $x_\a$~\cite[Eq.\,4]{PM2015}.
If the null space $N$ of this empirical Gramian is $\{0\}$ in the limit $\epsilon\to 0$, then the system is weakly observable in $x_\a$~\cite[Thm.\,2]{PM2015}.
Based on~\cite[Eq.\,(14)]{PM2015}, we thus interpret $N$ as the ``direction of weak unobservability'' in $x_\a$.
We approximate the assumption $\epsilon\to 0$ by taking $\epsilon=0.1$ and considering the eigenspace of the smaller eigenvalue.
Fig.~\ref{fig_MMD_heatmaps_duffing} shows the resulting approximation of $N$ by plotting a generating vector.
We find that the null space is the tangent space to the class of indisinguishability our test finds.
In other words, empirical Gramians provide the local directions of indistinguishability, while our test recovers the whole class.
Recovering the information on local observability of $x_\a$ could also be achieved with our test; e.g., by checking the test outcome in an $x_\b$-spherical shell centered in $x_\a$.
Indeed, states very close to $x_\a$ will most likely not trigger the test due to finite-sample approximations, and should thus be excluded.
We leave this connection to local observability for future work.
\end{remark}

\subsection{Analyzing Sensor Configurations on Hardware}\label{subsec:qube}

When studying an experimental system, one can sometimes choose between possible sensor configurations whose respective benefits can be uncertain.
The system may be observable and allow for building an observer to achieve further downstream tasks in some of these configurations but not in others.
We now demonstrate how the proposed tools can help compare these configurations from measurements.
For this, we collect hardware data on a Furuta pendulum: the Qube Servo 2 by Quanser~(Fig.~\ref{fig:scheme furuta}).
The pendulum consists of two arms, the first of which is actuated, with respective angular positions $(\theta_1, \theta_2)\in\R^2$ (in radians), where $\theta_1=0$ is arbitrary and $\theta_2 = 0$ corresponds to the upright position. The state is then $(\theta_1, \theta_2, \dot\theta_1, \dot\theta_2)$.
We collect a set of output trajectories by initializing the system by hand close to $x_\a=(0,0,0,0) $ and $x_\b=(\pi/4,0,0,0)$, respectively. 
We measure both angles with encoders and corrupt these measurements with centered Gaussian noise of variance $\SI{0.001}{\radian\tothe{2}}$ independently on each dimension, in order to further complicate the experiment. 
The trajectories of $\SI{9.75}{\second}$ are sampled with $\dt=\SI{0.01}{\second}$.

\subsubsection{Choosing a Sensor Configuration}
We consider three possible settings: $y=(\theta_1, \theta_2)$, $y=\theta_1$ and $y=\theta_2$.
Our goal is to determine from the collected samples which of these configurations enable distinguishing the initial states, a necessary condition to provably reconstruct the whole state of the system.
\par 
For this, we run the kernel two-sample test of Theorem~\ref{thm:test} ten times in all three settings, and present the results in Table~\ref{table_qube_results}.
The test triggers every time when $y=(\theta_1, \theta_2)$ or $y=\theta_1$, indicating that the initial distributions are distinguishable with these outputs with high confidence.
In contrast, it never triggers when $y=\theta_2$; the data does not allow to distinguish between $x_\a$ and $x_\b$ when only measuring $\theta_2$.
\reworded{\reviewcomment{1.10}This is consistent with the deterministic analysis: computing the Lie observability matrix shows that the model is locally observable around the origin for $y=\theta_1$, but not for $y=\theta_2$.}

\begin{remark}
The test evaluates the distinguishability of the initial \emph{distributions} $\mu_\a$ and $\mu_\b$ whereas, in practice, one is usually concerned with the distinguishability of the initial \emph{states} $x_\a$ and $x_\b$; yet, $\mu_\a$ and $\mu_\b$ are not Diracs. 
This could lead to a false interpretation of the test result. 
For example, if $x_\a=x_\b$ but $\mu_\a \neq \mu_\b$ (\eg one is much more spread than the other), the output distributions may differ: the test would then trigger, although $x_\a$ and $x_\b$ are indistinguishable. 
To interpret the test result as distinguishability induced by the \emph{dynamics} and not as an artifact introduced by non-ideal initial distributions, the support of the latter should be as narrow as possible around $x_\a$ and $x_\b$.
This is theoretically sound due to the continuity of the \ac{MMD}: if the support of $\mu_\a$ and $\mu_\b$ is sufficiently small then $\mmd[\lawoutput{\mu_\a}{T},\lawoutput{\mu_\b}{T}]$ is close to $\mmd[\lawoutput{x_\a}{T},\lawoutput{x_\b}{T}]$.
\end{remark}

\reworded{In practice, one can ensure a small spread through careful reinitializations or by discarding outlying trajectories.
We demonstrate robustness to this issue by running our experiments multiple times and randomly selecting $n=m=40$ trajectories from each state.}

\subsubsection{Consequences for Observer Behavior}
Beyond choosing sensor configurations, the information provided by our test can also help in understanding the behavior of observers that fail to converge.
\reworded{\reviewcomment{7.8}%
When a system is initialized in two distributionally indistinguishable states, an observer takes as input the \emph{same} measurement distributions from both these states.
Therefore, the output of the observer itself has the same distribution; its estimates are statistically identical on the whole class of indistinguishability.
}
\par
We demonstrate this by implementing \iac{EKF} based on a model of the pendulum and taking $y$ as input. 
We run the \ac{EKF} on our data set in the two settings $y=\theta_1$ (distinguishable) and $y=\theta_2$ (indistinguishable).
The observer converges in the first case, but not in the second one due to lack of observability (Fig.\,\ref{fig_quanserqube_EKF}, top).
More interestingly, we estimate the evolution of the distribution of $\sin(\theta_1)$ as estimated by the~\ac{EKF} (Fig.\,\ref{fig_quanserqube_EKF}, bottom).
In the distinguishable case (Fig.\,\ref{fig:violin distinguishable}), the two output distributions of the~\ac{EKF} differ, whereas they are similar in the indistinguishable case (Fig.\,\ref{fig:violin indistinguishable}).
This gives a precise meaning to the idea that an observer can estimate the class of indistinguishability of the initial state, which our test can determine.

\begin{table}
	\centering
	\begin{tabular}{@{\extracolsep{\fill}}cccc}
		\toprule
		Measurement & $y = (\theta_1, \theta_2)$ & $y = \theta_1$ & $y = \theta_2$ \\
        \midrule
        $\sigma$ (\si{\radian}) & 1300 & 1300 & 2500 \\
        \midrule
        $\frac{\mmdb}{\thresh}$ & 1.40 & 2.50 & 0.34 \\
        \midrule
        Trigger (\% of trials) & $100$ & $100$ & $0$\\
		\bottomrule
	\end{tabular}
	\caption{Results for the Furuta pendulum, averaged over ten tests. Only the first two configurations are identified as distinguishable.}
	\label{table_qube_results}
\end{table}

\section{Conclusion and Outlook}

\reworded{\reviewcomment{1.1}}In this paper, we present a method to assess the \reworded{distinguishability} of stochastic systems from data without explicitly relying on a dynamics model; this is the first such method to the best of our knowledge.
\par
We start by extending distributional distinguishability to arbitrary nonlinear stochastic systems and investigate its relationship to deterministic observability, showing that both notions are equivalent for certain classes of systems which include linear systems.
We then introduce the \ac{MMD} and a finite-sample estimate thereof to quantify distinguishability of initial states from noisy output trajectories.
We leverage the \ac{MMD} for a statistical test of distinguishability, hereby interpreting its value as how much data is required to tell initial states apart with confidence.
We illustrate the proposed tools and notions in simulation and on hardware.
We reveal experimentally the relationship between noise and distinguishability, emphasizing the interest of a continuous metric, and showcase how such an approach could benefit observer design by allowing for a priori error analysis.
\par
The theoretical analysis of distributional observability that we propose can be further pushed, e.g., by characterizing other classes of measurement noise that satisfy Assumption~\ref{asmptn:injectivity measurement}, or \added{by leveraging assumptions on the control to allow inferring distinguishability for one controller from data collected with another.\reviewcomment{10.5}}

\section*{Acknowledgment}
We thank Sebastian Giedyk for his help with the experiments.

\ifarxiv
\bibliography{IEEEabrv,references.bib}
\else
\printbibliography
\fi

\end{document}